\newtheorem{obs}{Observation}
\title{Covering the Boundary of a Simple Polygon \\ with Geodesic Unit Disks}
\author{George Rabanca \and Ivo Vigan \thanks{Research supported by NSF grant 1017539}}
\institute{Department of Computer Science, City University of New York, \\The Graduate Center, New York, NY 10016, USA.}
\begin{document}

\maketitle


\begin{abstract}          
We consider the problem of covering the boundary of a simple polygon on $n$ vertices using the minimum number of geodesic unit disks. We present an $O(n \log^2 n+k)$ time $2$-approximation algorithm for finding the centers of the disks, with $k$ denoting the number of centers found by the algorithm.

\end{abstract}

\section{Introduction and Main Results}




For two points $u$ and $v$ in a simple polygon $P$, the \emph{geodesic distance}, denoted by $d(u,v)$, is the length of the shortest path between $u$ and $v$ inside $P$. A \emph{geodesic unit disk} $D(v)$ centered at a point $v \in P$ is the set of all points in $P$ whose geodesic distance to $v$ is at most $1$.

The \emph{boundary} of $D(v)$, denoted by $\partial D(v)$, consists of all points in $P$ which are either exactly at distance $1$ from $v$ or at distance at most $1$ from $v$ but contained in the polygon boundary $\partial P$. The \emph{interior} of $D(v)$, denoted by $int(D(v))$,  consists of all the points of $D(v)$ not contained on the boundary of $D(v)$, i.e.,  $int(D(v)) = D(v) \setminus \partial D(v)$,  as shown in Fig.~\ref{geoDisk}. 

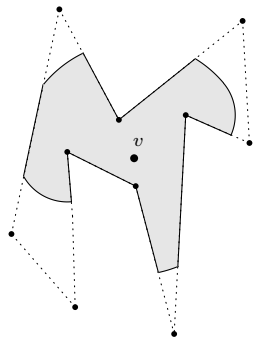
\begin{figure}
\center
\definecolor{uuuuuu}{rgb}{0.27,0.27,0.27}
\definecolor{qqqqff}{rgb}{0.9,0.9,.9}

\definecolor{dddddd}{rgb}{0.7,0.6,0.4}

\begin{tikzpicture}[line cap=round,line join=round,>=triangle 45,x=.7cm,y=.7cm]
\clip(1.5,-0.5) rectangle (6.4,6.1);
\fill[color=qqqqff,fill=qqqqff,fill opacity=1.0] (2.73,3.12) -- (1.9,2.66) -- (2.01,3.18) -- cycle;
\fill[color=qqqqff,fill=qqqqff,fill opacity=1.0] (4,3) -- (2.01,3.18) -- (2.27,4.4) -- cycle;
\fill[color=qqqqff,fill=qqqqff,fill opacity=1.0] (4,3) -- (4.03,2.47) -- (2.73,3.12) -- cycle;
\fill[color=qqqqff,fill=qqqqff,fill opacity=1.0] (4,3) -- (3.71,3.73) -- (3.03,5) -- cycle;
\fill[color=qqqqff,fill=qqqqff,fill opacity=1.0] (4,3) -- (4.46,0.83) -- (4.03,2.47) -- cycle;
\fill[color=qqqqff,fill=qqqqff,fill opacity=1.0] (4,3) -- (4.83,0.94) -- (4.98,3.82) -- cycle;
\fill[color=qqqqff,fill=qqqqff,fill opacity=1.0] (4,3) -- (5.16,4.89) -- (3.71,3.73) -- cycle;

\draw [shift={(4,3)},color=qqqqff,fill=qqqqff,fill opacity=1.0]  (0,0) --  plot[domain=4.92:5.1,variable=\t]({1*2.22*cos(\t r)+0*2.22*sin(\t r)},{0*2.22*cos(\t r)+1*2.22*sin(\t r)}) -- cycle ;
\draw [shift={(4,3)},color=qqqqff,fill=qqqqff,fill opacity=1.0]  (0,0) --  plot[domain=0.7:1.02,variable=\t]({1*2.22*cos(\t r)+0*2.22*sin(\t r)},{0*2.22*cos(\t r)+1*2.22*sin(\t r)}) -- cycle ;
\draw [shift={(4.98,3.82)},color=qqqqff,fill=qqqqff,fill opacity=1.0]  (0,0) --  plot[domain=-0.42:0.7,variable=\t]({1*0.94*cos(\t r)+0*0.94*sin(\t r)},{0*0.94*cos(\t r)+1*0.94*sin(\t r)}) -- cycle ;

\draw [shift={(4,3)},color=qqqqff,fill=qqqqff,fill opacity=1.0]  (0,0) --  plot[domain=2.02:2.46,variable=\t]({1*2.22*cos(\t r)+0*2.22*sin(\t r)},{0*2.22*cos(\t r)+1*2.22*sin(\t r)}) -- cycle ;

\draw [shift={(4.98,3.82)}] plot[domain=-0.42:0.7,variable=\t]({1*0.94*cos(\t r)+0*0.94*sin(\t r)},{0*0.94*cos(\t r)+1*0.94*sin(\t r)});
\draw [shift={(4,3)}] plot[domain=4.92:5.1,variable=\t]({1*2.22*cos(\t r)+0*2.22*sin(\t r)},{0*2.22*cos(\t r)+1*2.22*sin(\t r)});
\draw [shift={(4,3)}] plot[domain=2.02:2.46,variable=\t]({1*2.22*cos(\t r)+0*2.22*sin(\t r)},{0*2.22*cos(\t r)+1*2.22*sin(\t r)});
\draw [shift={(4,3)}] plot[domain=0.7:1.02,variable=\t]({1*2.22*cos(\t r)+0*2.22*sin(\t r)},{0*2.22*cos(\t r)+1*2.22*sin(\t r)});

\draw [shift={(2.73,3.12)},color=qqqqff,fill=qqqqff,fill opacity=1.0]  (0,0) --  plot[domain=3.65:4.79,variable=\t]({1*0.95*cos(\t r)+0*0.95*sin(\t r)},{0*0.95*cos(\t r)+1*0.95*sin(\t r)}) -- cycle ;

\draw [shift={(2.73,3.12)}] plot[domain=3.65:4.79,variable=\t]({1*0.95*cos(\t r)+0*0.95*sin(\t r)},{0*0.95*cos(\t r)+1*0.95*sin(\t r)});

\draw[dotted] (2.88,0.17)-- (1.67,1.56);
\draw[dotted] (1.67,1.56)-- (2.58,5.83);
\draw[dotted] (2.58,5.83)-- (3.71,3.73);
\draw[dotted] (3.71,3.73)-- (6.06,5.61);
\draw[dotted] (6.06,5.61)-- (6.19,3.29);
\draw[dotted] (6.19,3.29)-- (4.98,3.82);
\draw[dotted] (4.98,3.82)-- (4.76,-0.34);
\draw[dotted] (4.76,-0.34)-- (4.03,2.47);
\draw[dotted] (4.03,2.47)-- (2.73,3.12);
\draw[dotted] (2.73,3.12)-- (2.83,1.9);
\draw[dotted] (2.83,1.9)-- (2.88,0.17);

\draw [color=qqqqff] (2.73,3.12)-- (1.9,2.66);
\draw [color=black] (1.9,2.66)-- (2.01,3.18);
\draw [color=qqqqff] (2.01,3.18)-- (2.73,3.12);
\draw [color=qqqqff] (4,3)-- (2.01,3.18);
\draw [color=black] (2.01,3.18)-- (2.27,4.4);
\draw [color=qqqqff] (2.27,4.4)-- (4,3);
\draw [color=qqqqff] (4,3)-- (4.03,2.47);
\draw [color=black] (4.03,2.47)-- (2.73,3.12);
\draw [color=qqqqff] (2.73,3.12)-- (4,3);
\draw [color=qqqqff] (4,3)-- (3.71,3.73);

\draw [color=qqqqff] (3.03,5)-- (4,3);
\draw [color=qqqqff] (4,3)-- (4.46,0.83);
\draw [color=black] (4.46,0.83)-- (4.03,2.47);
\draw [color=qqqqff] (4.03,2.47)-- (4,3);
\draw [color=qqqqff] (4,3)-- (4.83,0.94);
\draw [color=black] (4.83,0.94)-- (4.98,3.82);
\draw [color=qqqqff] (4.98,3.82)-- (4,3);
\draw [color=qqqqff] (4,3)-- (5.16,4.89);
\draw [color=black] (5.16,4.89)-- (3.71,3.73);
\draw [color=qqqqff] (3.71,3.73)-- (4,3);
\draw [color=black] (4.98,3.82) -- (5.84,3.44);
\draw [color=black](2.73,3.12)  -- (2.81,2.17) ;
\draw [color=black] (3.71,3.73)-- (3.03,5);

\begin{scriptsize}
\fill [color=black] (2.88,0.17) circle (1.1pt);
\fill [color=black] (1.67,1.56) circle (1.1pt);
\fill [color=black] (2.58,5.83) circle (1.1pt);
\fill [color=black] (3.71,3.73) circle (1.1pt);
\fill [color=black] (6.06,5.61) circle (1.1pt);
\fill [color=black] (6.19,3.29) circle (1.1pt);
\fill [color=black] (4.98,3.82) circle (1.1pt);
\fill [color=black] (4.76,-0.34) circle (1.1pt);
\fill [color=black] (4.03,2.47) circle (1.1pt);
\fill [color=black] (2.73,3.12) circle (1.1pt);
\fill [color=black] (4,3) circle (1.5pt);
\draw[color=black] (4.08,3.3) node {$v$};
\end{scriptsize}
\end{tikzpicture}
\caption{A polygon (dotted) containing a geodesic disk centered at $v$, whose interior is depicted in gray and its boundary is drawn in black.}
\label{geoDisk}
\end{figure}

A collection of geodesic disks \emph{covers} the polygon boundary $\partial P$, if each point of $\partial P$ is contained in at least one disk. In this paper we present an $O(n \log^2 n + k)$ time $2$-approximation algorithm which finds a collection of geodesic unit disks covering the boundary of a simple polygon on $n$ vertices, with $k$ denoting the number of disks found by the algorithm. The algorithm then returns the centers of the disks. {We consider the setting where the centers can be placed anywhere inside the polygon, but the algorithm can be easily modified to restrict the centers to lie on $\partial P$.}  Furthermore, the \emph{number} of disks can be computed in time $O(n \log^2 n)$.

While it follows from Theorem 7 of \cite{viganPack} that our problem is $\mathsf{NP}$-hard in polygons with holes, its complexity remains open in simple polygons.

The main motivation for studying this problem comes from sensor networks, where {\emph{Barrier Coverage} problems have been studied extensively} (see for example \cite{conf/algosensors/BeregK09},\cite{5210107},\cite{Chen:2008:MGQ:1374618.1374674},\cite{Kumar:20059},\cite{Liu:2008:SBC:1374618.1374673},\cite{conf/infocom/SaipullaWLW09},\cite{4520201}). In a Barrier Coverage problem the goal is to place few sensors or guards to detect any intruder into a given region. {The algorithm in this paper can be applied to this context}: given a region, bounded by a piecewise linear closed border, such as a fence, place few guards inside the fenced region, such that wherever an intruder cuts through the fence, the closest guard is at most distance one away. Another way of looking at this problem is from an Art Gallery perspective (see for example \cite{ORourke:1987:AGT:40599}), where the polygon represents a gallery and, regardless where on the wall a painting is hanged, the closest guard is at most a distance one away.

\subsection{Related Work}
Several papers (\cite{Funke:2007},\cite{Huang:2003},\cite{1939},\cite{ijdsn/Ko12},\cite{936985},\cite{6226360}) study full coverage of geometric regions with Euclidean disks. For an overview of optimal coverings of squares and triangles with disks see Chapter 1.7  of \cite{unsolv2}. 

In the context of Barrier Coverage, \cite{Cabello:2013:CSP:2462356.2462383} presents a polynomial time algorithm which for two points in the plane and a set of Euclidean disks selects a minimal subset of the disks which separates the two points. Extending the problem to $k$ points, an $O(1)$-approximation algorithm was presented in \cite{Gibson:2011:IPU:2040572.2040580} and $\mathsf{NP}$-hardness was shown in \cite{DBLP:journals/corr/abs-1303-2779}. The same two point separation problem was studied in \cite{minCellCon} when segments instead of disks are given.

{Covering a simple polygon with a single geodesic disk of minimum radius has been studied in \cite{pollackGeod} and a linear-time algorithm is presented in \cite{heeKapLinear}. {An output sensitive algorithm for computing geodesic disks for a given set of centers and a fixed radius is presented in \cite{geodPac}}. 

\subsection{Paper Organization}
This paper is organized as follows. In Section \ref{algoSec} we present the algorithm and show that it runs in time $O(n \log^2 n + k)$. In Section \ref{approx} we  {prove that the number of centers placed by the algorithm is at most twice the minimum number of centers needed to cover the polygon boundary}.  In Section \ref{refAna} we show that a simple linear time algorithm achieves an asymptotically optimal approximation ratio when the polygon perimeter is much larger than $n$. All the missing proofs can be found in the Appendix.
\section{The Algorithm and Its Running Time}
\label{algoSec}

Our algorithm makes use of several properties of geodesic Voronoi diagrams which we review below.

\subsection{Geodesic Voronoi diagrams}
\label{vorDia}

A \emph{furthest-site geodesic Voronoi diagram} of $k$ sites in a simple polygon $P$ on $n$ vertices is a decomposition of $P$ into cells such that all points in a cell have the same site \emph{furthest} away from them {(in the geodesic metric)}. As shown in \cite{BorisFurthest}, it has combinatorial complexity $O(n+k)$ and can be constructed in time $O((n+k)\log(n+k))$. In Section 2.8 of \cite{BorisFurthest} it is shown that these combinatorial and time complexities are with respect to a \emph{refinement} (also called a shortest path partition) of the Voronoi edges. For all points on a refined edge it holds that their shortest paths to each of the two furthest sites are combinatorially equivalent, i.e., they consist of the same sequence of polygon vertices respectively. Furthermore, Section 3.3 of \cite{BorisFurthest} defines for each of the $O(n+k)$ refined edges, and for each of the two furthest sites $s_1$ and $s_2$ defining a Voronoi edge $e$, the \emph{anchor points} $a_e(s_1)$, $a_e(s_2)$ which are the last points on the shortest path from $s_1$, $s_2$ respectively to any point on $e$. Those anchors can be computed in total $O(n+k)$ time and each time we compute a   furthest-site geodesic Voronoi diagram we store the anchors as well as the distance to its site at the refined Voronoi edges. An additional property of this Voronoi diagram is that its edges form a tree, rooted at the \emph{geodesic center} of the $k$ sites, which is defined as the point that minimizes the maximum distance to any of the sites (see Corollary 2.9.3 of \cite{BorisFurthest}). Therefore, the geodesic center of the sites can be obtained within the same time bound.

The second data structure we use is the \emph{closest-site geodesic Voronoi diagram}  { which, for $k$ sites in a simple polygon $P$ on $n$ vertices,}
is a decomposition of $P$ into cells such that all points in a cell have the same site \emph{closest} to them {(in the geodesic metric)}. It has combinatorial complexity $O(n+k)$ and it can be constructed in time $O((n+k)\log(n+k))$ (see \cite{borisClose}). 

\subsection{The \textsc{ContiguousGreedy} Algorithm}
In this section we describe a greedy $2$-approximation algorithm which finds a collection of geodesic unit disks which cover the boundary of $P$ and returns the set of disk centers. It starts at vertex $v_1$ of  the vertices $v_1, \ldots, v_n$ of $P$ and {iteratively extends a contiguous cover $\Gamma$ of $\partial P$  (in clockwise order) by the maximum amount that can be covered with a single geodesic disk.} We denote the clockwise endpoint of $\Gamma$ by $c$, thus initially $\Gamma = \{v_1\}$ and $c = v_1$. \\

We cover segment portions longer than $2$ in time linear in the minimum number of disks needed to cover them. With $v_u$ denoting the first uncovered vertex in the current iteration, we}  partially cover $\overline{cv_u}$ by adding $\lceil d(c, v_u)/2 \rceil - 1$ {centers sequentially on $\overline{cv_u}$}.  By this, we assure that none of  {those disks} contains $v_u$ and, since each disk contains a boundary portion of length $2$, the disks placed are indeed optimal with respect to the greedy contiguous extension criterion.



\begin{definition}
For a polygonal chain $C$, we denote by $\|C\|$ the sum of the lengths of its line segments and we refer to the number of vertices of $C$ by $|C|$.
\end{definition}

\begin{definition}
For two points $u,v \in \partial P$, we denote the portion of $\partial P$ in clockwise orientation between $u$ and $v$ by $\partial P[u,v]$.
\end{definition}

If $c$ does not lie on a long segment, we compute the next endpoint $c'$ which extends $\Gamma$ in clockwise order by a maximum length boundary portion which can be covered by a single geodesic unit disk. We do this   by finding the first vertex $v_u$ (in clockwise order) such that $\partial P[c, v_u]$ cannot be contained in a single geodesic unit disk. This test is done by calling the \textsc{TestCover}$(c,v)$ procedure discussed below, which, for a boundary point $c$ and a vertex $v$ tests whether $\partial P[c, v]$ can be covered with a single geodesic unit disk.  
 {If $v_i$ is the first vertex in clockwise order after $c$, we find $v_u$ by first using exponential search with the \textsc{TestCover} predicate with $c$  fixed and $v$ set to $v_{i+1}, v_{i+2}, v_{i+4}, ... v_{i+2^k}, ...$ respectively in consecutive steps until \textsc{TestCover} returns $false$ or $i + 2^k > n$.} 
This defines an {index}-interval containing {the index} $u$ which can then be found using a simple binary search. 

After finding $v_u$ and thereby fully determining the sequence of vertices covered in the current iteration, we use the \textsc{AugmentShort} procedure -- discussed below -- to compute the new endpoint $c'$ of $\Gamma$ as well as the center of the next disk. \\

\vspace{-10px}

{\center
\fbox{%
\begin{minipage}{\textwidth}
\noindent {\bf \textsc{ContiguousGreedy}}\\
\noindent $c \gets v_1, \; v_u \gets v_2$\\
\noindent $S \gets \emptyset$\\
\noindent{\bf while} $\partial P$ not covered: 
\begin{itemize}[leftmargin=.5in]
\vspace{-5px}
 \item  [1.]  If $\overline{cv_u}$ is longer than 2 \\
\hspace*{10px}  compute centers on $\overline{cv_u}$ at steps of $2$; add them to $S$; update $c$
\item [2.] Update $v_u$ to the first vertex s.t. $\partial P[c, v_u]$ cannot be covered by a single disk,
		 using Exponential and Binary Search with predicate \textsc{TestCover} 
\item [3.]  Use \textsc{AugmentShort} to cover the vertices between $c$ and $v_u$, and a maximal portion of the edge $\overline{v_{u-1}v_u}$; add new center to $S$ and update $c$
\end{itemize}
\vspace{-5px}
\noindent{\bf end while}\\
\noindent{\bf return} $S$
\end{minipage}}
}

\begin{definition}
For two points $u,v$ in a simple polygon $P$, we denote the shortest path in $P$ between $u$ and $v$ by $\pi(u,v)$. We denote the number of its vertices by $|\pi(u,v)|$.
\end{definition}

\begin{definition}[\cite{Toussaint89computinggeodesic}]
A set $Q$ inside a simple polygon $P$ is called \emph{geodesic convex}, if for any two points $u,v \in Q$, the shortest path $\pi(u,v)$ is contained in $Q$. 
\end{definition}



\paragraph{\textbf{\textsc{TestCover}$(c, v)$.}}
This procedure tests for a boundary point $c$ and a polygon vertex $v$ whether $\partial P[c,v]$ can be covered with a single geodesic unit disk. Observe that if a geodesic unit disk {can cover} a set of points, then a geodesic unit disk centered at the \emph{geodesic center} of those points obviously  also covers them. Let $U = U(c,v)$ denote the sequence of point $c$ and all polygon vertices up to (and including) $v$ in clockwise order. \textsc{TestCover} computes the geodesic center of $U$ and returns true iff it has distance at most one to all points in $U$. \\

\noindent {\em Implementation details.} We compute the geodesic center of $U$ in a smaller polygon $Q$ containing $U$. We let $Q$ be $ \partial P[c, v]  \circ \pi(v,c)$, with $\circ$ denoting the concatenation of two polygonal chains sharing two endpoints. Note that $Q$ may have touching sides, but it is not self-intersecting. Such polygons are referred to as \emph{weakly simple} polygons (\cite{Dumitrescu2009112}) and the geodesic distance within them is well defined. Since $Q$ is the concatenation of a boundary part of $P$ and a shortest path in $P$ it follows that $Q$ is geodesic convex in $P$, thus implying that the geodesic center of $U$ in $Q$ is the same point as the geodesic center of $U$ in $P$. We find this geodesic center point by computing the furthest-site geodesic Voronoi diagram $\mathcal{VP}_Q(U)$ of the sites $U$ in $Q$, traversing the (oriented) Voronoi edges to the root and thereby obtain the geodesic center of $U$ (see Section \ref{vorDia}). Then, for each site in $U$ we test whether the distance to the geodesic center is at most one. 

\noindent {\em Computational complexity.}  Computing $\pi(v,c)$ takes time $O(|\pi(v,c)|\log n)$ after $O(n)$ global pre-processing time, using the algorithm of \cite{Guibas1989126}; concatenating two polygonal chains to construct $Q$ takes constant time. Computing $\mathcal{VP}_Q(U)$ takes $O(|Q|\log(|Q|))$ time and the geodesic center can be obtained from $\mathcal{VP}_Q(U)$ in the same time bound. Computing the distance from the geodesic center to all sites in $U$ can be done in time $O(|Q|)$ (see \cite{lineshortpath}), by building the shortest path tree rooted at the geodesic center. 
Therefore, the procedure has an overall time complexity of $O(|Q|\log n)$. \\

Knowing the first vertex $v_u$ such that $\partial P[c,v_u]$ cannot be covered with a single geodesic unit disk, we compute the center of the next disk and compute the new endpoint $c'$ of $\Gamma$ the following \textsc{AugmentShort} procedure.

\paragraph{\textbf{\textsc{AugmentShort}$(c, v_u)$.}}
For the new endpoint $c'$ of $\Gamma$ it needs to hold that $\partial P[c, c']$ can be covered with one geodesic unit disk, and for any $c''  \in \partial P$, with $\|\partial P[c, c'']\| > \|\partial P[c, c']\|$ it is not possible to cover $\partial P[c, c'']$ with a single geodesic unit disk. 
Let $\overline{U} = \overline{U}(c,v_{u-1})$ denote the clockwise sequence of point $c$ and all vertices up to (and including) $v_{u-1}$. We construct $Q = \partial P[c, v_u]  \circ \pi(v_u,c)$ and denote by $A$ the intersection of the geodesic unit disks centered at the points $\overline{U}$ in $Q$. This intersection is non-empty by construction and the center of the next disk lies in $A$. We denote by $I$ the set of all disk-disk intersection points on $\partial A$ as shown in Fig. \ref{algoPic}. { Lemma~\ref{Aint} below justifies the steps taken to find $c'$}.

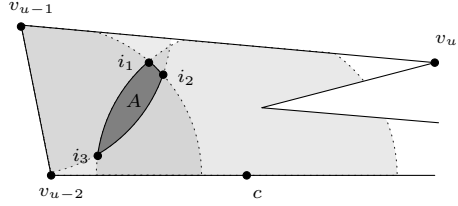
\begin{figure}
\center
\definecolor{uququq}{rgb}{0.25,0.25,0.25}
\definecolor{sqsqsq}{rgb}{0.13,0.13,0.13}
\definecolor{darkgrey}{rgb}{0.5,0.5,0.5}
\begin{tikzpicture}[line cap=round,line join=round,>=triangle 45,x=1.0cm,y=1.0cm]


\draw [shift={(3,1)},dotted,color=sqsqsq,fill=sqsqsq,fill opacity=0.1]  (0,0) --  
plot[domain=0:0.95,variable=\t]({1*2*cos(\t r)+0*2*sin(\t r)},{0*2*cos(\t r)+1*2*sin(\t r)}) --
plot[domain=2.02:3.14,variable=\t]({1*2*cos(\t r)+0*2*sin(\t r)},{0*2*cos(\t r)+1*2*sin(\t r)}) -- cycle ;

\draw [dotted,color=sqsqsq,fill=sqsqsq,fill opacity=0.1]  (0.4,1) --  
plot[domain=0:1.27,variable=\t]({0.4 + 1*2*cos(\t r)+0*2*sin(\t r)},{1 + 0*2*cos(\t r)+1*2*sin(\t r)}) -- (0, 3) -- cycle ;

\draw [shift={(0,3)},dotted,color=uququq,fill=uququq,fill opacity=0.1]  (0,0) --  
plot[domain=-1.37:-0.09,variable=\t]({1*2*cos(\t r)+0*2*sin(\t r)},{0*2*cos(\t r)+1*2*sin(\t r)}) -- cycle ;

\draw [solid,color=black, fill=darkgrey]  
plot[domain=2.28:3.0,variable=\t]({3 + 1*2*cos(\t r)+0*2*sin(\t r)},{ 1+ 0*2*cos(\t r)+1*2*sin(\t r)}) -- 
plot[domain= -1.0:-0.34,variable=\t]({1*2*cos(\t r)+0*2*sin(\t r)},{ 3+ 0*2*cos(\t r)+1*2*sin(\t r)}) -- 
plot[domain= 0.8:0.86,variable=\t]({0.4 + 1*2*cos(\t r)+0*2*sin(\t r)},{ 1+ 0*2*cos(\t r)+1*2*sin(\t r)});

\draw [fill=white, draw=none] (5.5,2.5) -- (3.2, 1.9)-- (5.55, 1.7) -- cycle;
\draw (5.5, 1)-- (3,1)-- (0.4,1)-- (0,3)-- (5.5,2.5) -- (3.2, 1.9)-- (5.55, 1.7);



\begin{scriptsize}
\draw [fill=black] (3,1) circle (1.4pt);
\draw[color=black] (3.14,.75) node {$c$};
\draw [fill=black] (0.4,1) circle (1.4pt);
\draw[color=black] (0.54,.75) node {$v_{u-2}$};
\draw [fill=black] (0,2.98) circle (1.4pt);
\draw[color=black] (0.14,3.22) node {$v_{u-1}$};
\draw [fill=black] (5.5,2.5) circle (1.4pt);
\draw[color=black] (5.66,2.8) node {$v_u$};
\draw[color=black] (1.5,2) node {$A$};

\draw[fill=black] (1.7,2.5) circle (1.4pt);
\draw[color=black] (1.4,2.5) node {$i_1$};
\draw[fill=black] (1.89,2.34) circle (1.4pt);
\draw[color=black] (2.2,2.3) node {$i_2$};
\draw[fill=black] (1.02,1.26) circle (1.4pt);
\draw[color=black] (.8,1.26) node {$i_3$};


\end{scriptsize}
\end{tikzpicture}
\caption{Illustration of $A$, i.e., the intersection of the geodesic unit disks centered at points in $\overline{U} = \{c, v_{u-2}, v_{u-1}\}$ as well as the disk-disk intersection points $I = \{i_1, i_2, i_3\}$.}
\label{algoPic}
\end{figure}

\begin{lemma}
Given a  simple polygon $P$, let $A$ be the non-empty intersection of a collection of geodesic unit disks in $P$ and let $\overline{\alpha\beta}$  be a line-segment in $P$, such that for all $a \in A$, $d(a, \alpha) \leq 1$ and $d(a, \beta) > 1$.   For any point $c \in \overline{\alpha\beta}$ and any disk center $q$ furthest away from $c$, $d(c, A) = 1$ if and only if either:
\begin{enumerate}
\item[a)] $d(c, q) = 2$ and $\pi(c,q) \cap \partial D(q) \in A$, or
\item[b)] $d(c, I) = 1$ and $\pi(c, q)  \cap \partial D(q) \notin A$, 
\end{enumerate}
\label{Aint}
with $I$ denoting the disk-disk intersection points on $\partial A$ and $d(c, Y) = \min_{s \in Y} d(c,s)$, for a point set $Y$ in $P$. 
\end{lemma}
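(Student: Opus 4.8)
The plan is to reduce everything to the geometry of a point $a^{*}\in A$ nearest to $c$ (which exists since $A$ is compact), $d(c,a^{*})=d(c,A)$. Three elementary observations do most of the work: (i) since $a\in A\subseteq D(q_{i})$ for every centre $q_{i}$, the triangle inequality gives $d(c,q_{i})\le d(c,a)+1$ for all $a\in A$, and in particular $d(c,A)\ge d(c,q)-1$ for the furthest centre $q$; (ii) writing $p_{q}=\pi(c,q)\cap\partial D(q)$ for the point of $\pi(c,q)$ at distance $1$ from $q$ (defined once $d(c,q)\ge1$; when $d(c,q)<1$ all centres lie within $1$ of $c$, so $c\in A$, $d(c,A)=0$, and both sides of the equivalence fail), we have $d(c,p_{q})=d(c,q)-1$; (iii) $I\subseteq A$, so $d(c,I)\ge d(c,A)$. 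If $p_{q}\in A$, then (i) and (ii) give $d(c,A)\le d(c,p_{q})=d(c,q)-1\le d(c,A)$, hence $d(c,A)=d(c,q)-1$, so $d(c,A)=1$ iff $d(c,q)=2$ --- which, together with $p_{q}\in A$, is exactly alternative~(a), and (b) is excluded. The two easy reverse implications are the same identities: under (a), $1=d(c,q)-1\le d(c,A)\le d(c,p_{q})=1$; under (b), $d(c,A)\le d(c,I)=1$, and the matching bound $d(c,A)\ge1$ will come from the structural claim below.

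It remains to handle the regime $p_{q}\notin A$, and for this I would establish the following \emph{structural claim}: if $a^{*}$ is a nearest point of $A$ to $c$ with $a^{*}\notin\partial P$, then either $a^{*}$ lies on at least two of the boundaries $\partial D(q_{i})$ --- whence $a^{*}\in I$ --- or $a^{*}$ lies on exactly one of them, $\partial D(q_{j})$, in which case $q_{j}$ is a centre furthest from $c$ and $a^{*}=p_{q_{j}}$. (If $a^{*}\in\partial P$, then $a^{*}\in A\subseteq D(q_{i})$ for all $i$ forces $a^{*}$ onto every disk boundary by the definition of $\partial D$, so $a^{*}\in I$ as soon as there are two centres; for a single centre, $A=D(q)$ and $p_{q}\in A$, so (b) is vacuous, and in general $p_{q}\notin A$ implies there are at least two centres.) Granting the claim, both remaining directions close. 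Forward: if $d(c,A)=1$ and $p_{q}\notin A$, pick a nearest point $a^{*}$; if $a^{*}\in\partial P$ then $a^{*}\in I$ and (b) follows as above, so assume $a^{*}\notin\partial P$. If $a^{*}\in I$, then $1=d(c,A)\le d(c,I)\le d(c,a^{*})=1$ gives $d(c,I)=1$, i.e.\ (b); otherwise $a^{*}$ lies on the single boundary $\partial D(q_{j})$, $q_{j}$ is furthest, $a^{*}=p_{q_{j}}$, so $d(c,q_{j})=2$, and for the given furthest centre $q$ we get $d(q,a^{*})\le 1$ (as $a^{*}\in A$) and $d(q,a^{*})\ge d(c,q)-d(c,a^{*})=1$, so $a^{*}\in\partial D(q)$, forcing $q=q_{j}$ and hence $p_{q}=a^{*}\in A$ --- contradicting $p_{q}\notin A$, so this case does not occur. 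Reverse of (b): suppose $d(c,A)<1$ and $p_{q}\notin A$; applying the claim to a nearest point $a^{*}$ with $d(c,a^{*})<1$ yields either $d(c,I)\le d(c,a^{*})<1$ (contradicting $d(c,I)=1$) or, exactly as above, $p_{q}=a^{*}\in A$ (contradiction); hence $d(c,A)\ge1$.

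The hard part is the structural claim, and within it the subcase $a^{*}\notin\partial P$ on exactly one boundary $\partial D(q_{j})$: then in a neighbourhood of $a^{*}$ the set $A$ coincides with the metric ball $D(q_{j})$, which --- because the last vertex $w$ of $\pi(q_{j},a^{*})$ sees a neighbourhood of $a^{*}$ --- locally coincides with the Euclidean disk of radius $r=1-d(q_{j},w)$ about $w$ (with $w=q_{j}$, $r=1$ when $q_{j}$ sees $a^{*}$). A nearest point of $A$ to $c$ is then a smooth point of $\partial D(q_{j})$: ridges, where $D(q_{j})$ is locally a union of two such disks and bulges outward, are never nearest to an external point except in degenerate position. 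At a smooth point, the first-order condition forces the final segment of $\pi(c,a^{*})$, extended past $a^{*}$, to run toward $w$ (otherwise one could step from $a^{*}$ slightly into $D(q_{j})$ and strictly decrease $d(c,\cdot)$). Since $\pi(a^{*},q_{j})$ also leaves $a^{*}$ toward $w$, the paths $\pi(c,a^{*})$ and $\pi(a^{*},q_{j})$ concatenate into $\pi(c,q_{j})$; hence $d(c,q_{j})=d(c,a^{*})+1$ and $a^{*}=p_{q_{j}}$, and then observation~(i), $d(c,q_{i})\le d(c,a^{*})+1=d(c,q_{j})$ for all $i$, shows $q_{j}$ is furthest. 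I expect this first-order argument --- together with the accompanying bookkeeping of degenerate positions (collinear vertex triples on a common chord, ties among furthest centres, $c\in\partial P$), which either reduce to the generic picture or make the relevant predicates behave as stated --- to be the only genuinely delicate step.
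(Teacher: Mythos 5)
Your overall skeleton --- reduce to the nearest point $a^{*}\in A$, classify it, and finish with triangle inequalities around $p_{q}=\pi(c,q)\cap\partial D(q)$ --- is the same as the paper's (your ``structural claim'' is exactly its two Observations), but the proof you offer for that claim has genuine gaps, and it is the heart of the lemma. First, dismissing non-smooth points of $\partial D(q_{j})$ because ridges ``are never nearest to an external point except in degenerate position'' is not a proof: the lemma must hold for every configuration, and the degenerate cases (together with ``collinear vertex triples'', ties among furthest centres, $c\in\partial P$) are exactly what you defer to unspecified bookkeeping. Moreover your local picture of a ridge where $D(q_{j})$ is a union of two Euclidean disks bulging outward would contradict the geodesic convexity of $D(q_{j})$; in a simple polygon geodesics are unique and consecutive arcs of $\partial D(q_{j})$ meet tangentially across shortest-path-map windows, so the situation you wave away is not even the right one to consider. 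Second, your first-order step silently uses that the concatenation $\pi(c,a^{*})\circ\pi(a^{*},q_{j})$, being straight at $a^{*}$, is the global shortest path $\pi(c,q_{j})$ --- true in a simple polygon but asserted, not argued. Third, and most concretely, the case $a^{*}\in\partial P$ is handled by reading $I$ as containing every polygon-boundary point covered by two disks (via the inclusive definition of $\partial D$); but $I$ is the finite set of points at geodesic distance exactly one from two centres --- this is how the algorithm constructs and queries $I$ --- so a nearest point lying on $\partial P$ at distance exactly $1$ from a single centre falls into neither branch of your claim, and your argument gives no conclusion for it.

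The paper closes precisely these holes with a short global metric argument that needs no smoothness, no genericity, and no distinction between interior and $\partial P$ points: if the nearest point $p$ lies on $\partial D(q')$ and $p'=\pi(c,q')\cap\partial D(q')\notin A$, then by geodesic convexity $\pi(p,p')\subseteq D(q')$, points of this path near $p$ lie in every other disk (hence in $A$), and Lemma~\ref{pollackCon} together with $d(c,p')<d(c,p)$ (uniqueness of shortest paths) makes them strictly closer to $c$ than $p$, contradicting minimality; hence either $p\in I$, or $\pi(c,q')\cap\partial D(q')\in A$, in which case the triangle inequality forces $p=\pi(c,q)\cap\partial D(q)$ with $q'=q$ the furthest centre. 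To complete your write-up you should either import such a global convexity argument for your structural claim or actually carry out the deferred case analysis in full, including the $a^{*}\in\partial P$ case with the correct, finite, definition of $I$.
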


\begin{proof}(Lemma~\ref{Aint})
Let $p$ be the point in $A$ closest to the point $c$ and by $q$ a center farthest from $c$. Notice that since $A$ is geodesic convex, $p$ is unique {and it lies on $\partial A$}.  We prove Lemma \ref{Aint} with the help of the following two observations.

\begin{obs}
If $p \in \partial D(q')$ for some center $q'$ and $\pi(q',c) \cap \partial D(q') \notin A$ then $p \in I$.
\label{PinI}
\end{obs}
\begin{proof}
Let $p' = \pi(q',c) \cap \partial D(q')$ and assume that $p\notin I$, thus $p$ is in the interior of all the disks defining $A$, other than $D(q')$. Furthermore,  since $D(q')$ is geodesic convex, $\pi(p, p') \cap A$ contains a point $a$, with $a \neq p$.  Since $d(p',q') = d(p, q') = 1$ and $p'$ is on the shortest path from $q'$ to $c$, by uniqueness of the shortest path, $d(c, p') < d(c, p)$.  Then, by Lemma~\ref{pollackCon}, $d(c, a) < \max\{d(c, p), d(c, p')\} = d(c, p)$, contradicting  that $p$ is the point in $A$ closest to $c$.  \qed
\end{proof}

\begin{obs}
If $p \in A\setminus I$ then $p = \pi(c, q) \cap \partial D(q)$. 
\label{PinA}
\end{obs}
\begin{proof}
Since $p \notin I$ there is a unique center $q'$ such that $p \in \partial D(q')$.    As shown in Observation \ref{PinI}, if $p \in D(q')$ and $\pi(c, q') \cap D(q') \notin A$ then $p \in I$. Therefore $\pi(c, q') \cap D(q')$ is contained in $A$ and we denote this point by $p'$.

Observe that $p'$ is contained in $\pi(c, q')$ and is at distance $1$ from $q'$, thus $d(q', c) = d(q', p') + d(p', c) = 1 + d(p', c)$.  Clearly,  $d(q',c) \leq d(q', p) + d(p, c) = 1 + d(p, c)$ and since $p$ is the closest point in $A$ to $c$, it follows that $p = p'$.  Now observe that, since $d(q', c) = 1 + d(p, c)$ and the distance between $p$ and any other center is less than $1$ (because $p \notin I$), $q'$ is the farthest center, i.e., $q' = q$.  Therefore  $p = \pi(c, q) \cap D(q)$ holds as claimed.  \qed
\end{proof}

We now prove Lemma \ref{Aint} :

"$\Rightarrow$".  We distinguish two cases based on whether $p\in A\setminus I$ or  $p\in I$.  If $p\in A \setminus I$, then $p = \pi(c, q) \cap D(q)$ as shown in Observation \ref{PinA}. Therefore, $\pi(c, q) \cap D(q) \in A$ and $d(c, q) = d(c, p) + d(p, q) = 2$, and thus condition a) holds.  If $p \in I$ and $\pi(c, q) \cap D(q) \notin A$ condition b) holds.  Otherwise if $p \in I$ and   $\pi(c, q) \cap D(q) \in A$, let $p' = \pi(c, q) \cap D(q)$. We can write the distance $d(c, q)$ as $d(c, p') + d(p', q)$. Since $p'$ lies on $\partial D(q)$, $d(p',q) = 1$ and since $d(c,A) = 1$ this implies that $d(c,p') \geq 1$. Therefore, distance $d(c,q) \geq 2$. 

By the triangle inequality it also holds that $d(c, q) \leq d(c, p) + d(p, q)$. Since $p$ is the closest point in $A$ to $c$, $d(c,p) = 1$ by hypothesis. Since $p$ lies on $\partial D(q)$, $d(p,q) = 1$. Therefore,  $d(c, q) \leq 2$ and combining this with $d(c,q) \geq 2$ from above, $d(c, q) = 2$ and again condition a) holds.

\noindent "$\Leftarrow$"
\noindent a) Let $p' = \pi(c,q) \cap \partial D(q)$.  Then $d(c, A) \leq d(c, p') = d(c, q) - d(q, p') = 1$.  If $d(c,A) < 1$, by definition $d(c,p) < 1$. Since $p \in A$, $d(p,q) \leq 1$ and by the triangle inequality, $d(c, q) \leq d(c, p) + d(p, q) < 2$ which contradicts $d(c, q) = 2$.

\noindent b) Since $d(c, I) \leq 1$ and $I \subseteq A$, obviously $d(c, A) \leq 1$.  For $p \in A$ the closest point to $c$ in $A$, assume that $d(c, p)< 1$.  Since $d(c, I) = 1$, $p \notin I$.  Therefore, by Observation \ref{PinA}, $p = \pi(c,q') \cap \partial D(q')$, and thus this intersection is in $A$ contradicting the hypothesis.
\qed
\end{proof}

We use the following steps to determine $c'$ on $e = \overline{v_{u-1}v_{u}}$.\\
Step 1)  Find the point $x_1$ on $e$ closest to $v_u$, whose distance to its furthest point $q$ in $\overline{U}$ is exactly $2$ and $\pi(x_1,q) \cap \partial D(q) \in A$, if such a point $x_1$ exists. \\
Step 2) Find the point $x_2$ on $e$ closest to $v_u$, whose distance to its closest point in $I$ is exactly $1$.  \\
Step 3) Set $c' \gets x_2$ if $x_1$ does not exist or $d(x_2,v_u) < d(x_1,v_u)$. In this case we add the point in $I$ closest to $c'$ as the new disk center to the set $S$ of centers. Otherwise $c' \gets x_1$ and the point $\pi(x_1,q) \cap \partial D(q)$ is the new disk center which gets added to $S$.\\

Note that since $v_{u-1}$ will be covered in this iteration and $v_{u}$ won't be covered, $d(v_{u-1}, A) \leq 1 < d(v_{u}, A)$. By continuity of the geodesic distance, there is a point $c'$ on $e$, with $d(c', A) = 1$ and thus by Lemma~\ref{Aint} either $x_1$ or $x_2$ exists.\\

In Step 1, to find $x_1$ if it exists, we construct the (refined) furthest-site geodesic Voronoi diagram of the sites $\overline{U}$ in $Q$ and traverse the Voronoi vertices $\gamma_1, \ldots, \gamma_m$ on $e$, ordered in the direction from $v_{u}$  to $v_{u-1}$ and set $\gamma_{m+1} =  v_{u-1}$. For each such vertex we check in $O(\log |Q|)$ time whether the distance to (one of) its furthest site(s) is at most $2$, using an $O(\log |Q|)$ time shortest path query (\cite{Guibas1989126}) after pre-processing $Q$ in $O(|Q|)$ time. Once we find the first $\gamma_j$ with distance at most $2$, if it exists, this determines a sub-segment $\overline{\gamma_j\gamma_{j-1}}$ on $e$ containing a point $x$ at distance exactly $2$ from its furthest site $q$. Note that since the shortest paths to the furthest site $q$ have the same combinatorial structure for all points on the refined Voronoi edge $\overline{\gamma_j\gamma_{j-1}}$, we find the point at distance $2$ to $q$ in constant time since we stored the anchor point $a_{\overline{\gamma_j\gamma_{j-1}}}(q)$ at the edge $\overline{\gamma_j\gamma_{j-1}}$ (see Section 2.1). We check if $\pi(x,q) \cap \partial D(q) \in A$, by computing $D(q)$ in time $O(|Q|)$ using \cite{lineshortpath} and finding in $O(\log|Q|)$ time the arc $\alpha$ of $D(q)$ separating $q$ from $x$. We traverse the edges of $\pi(x,q)$ and for each edge we test in $O(1)$ time if it intersects $\alpha$. Denoting the intersection point by $p$, we check if $p \in A$, by computing the shortest path tree to the sites in $U$ and test if the distance to all sites is at most $1$ in time $O(|Q|)$. If this intersection is in $A$, we set $x_1$ to $x$.\\

\vspace{-9pt}
\begin{claim}There can be at most two points on $e$ that have distance exactly $2$ from their respective furthest site; if there are two such points, one of them must be $v_{u-1}$.  
\label{claim}
\end{claim}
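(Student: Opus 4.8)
\medskip
\noindent\emph{Proof plan for Claim~\ref{claim}.}
The plan is to reduce the statement to facts about the single real‑valued function
$f \colon e \to \mathbb{R}$, $f(x) = \max_{q\in\overline U} d(x,q)$,
so that $f(x)$ is exactly the geodesic distance from $x$ to its furthest site in $\overline U$. I would prove two properties of $f$: (i) $f$ is quasiconvex along the segment $e$, and (ii) $f$ is not constant on any non‑degenerate sub‑segment of $e$; and separately (iii) that $f(v_{u-1})\le 2$. Granting these, a one‑dimensional argument finishes the proof: by (i) the sub‑level set $\{x\in e: f(x)\le 2\}$ is a closed sub‑segment $[\alpha,\beta]\subseteq e$, and $\{x\in e: f(x)<2\}$ is a sub‑interval of it, so $\{f=2\}=[\alpha,\beta]\setminus\{f<2\}$ is a union of at most two sub‑intervals of $[\alpha,\beta]$; by (ii) each of them is a single point, giving at most two points on $e$ at distance exactly $2$ from their furthest site, and when there are exactly two, they must be the two endpoints $\alpha$ and $\beta$. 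Finally, by (iii) we have $v_{u-1}\in[\alpha,\beta]$, and since $v_{u-1}$ is an endpoint of $e\supseteq[\alpha,\beta]$ it must coincide with $\alpha$ or $\beta$; hence when there are two such points one of them is $v_{u-1}$.

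\medskip
To get (i) I would use that for each fixed site $q$ the geodesic distance $d(\cdot,q)$ restricted to the line segment $e$ is convex, by Lemma~\ref{pollackCon} applied to the (trivial) shortest path $e$ itself; then $f$, being the pointwise maximum of the finitely many convex functions $\{d(\cdot,q):q\in\overline U\}$, is convex and in particular quasiconvex. To get (ii) I would invoke the refinement of the furthest‑site geodesic Voronoi diagram of $\overline U$ described in Section~\ref{vorDia}: it partitions $e$ into finitely many sub‑segments, on each of which there is a single furthest site $q$ whose shortest paths from points of the sub‑segment share the same anchor vertex $a$, so that $d(x,q)=\ell_q+\|x-a\|$ for a constant $\ell_q$ as $x$ ranges over that sub‑segment. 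Since $\|x-a\|$, as a function of the position of $x$ along the line through $e$, is strictly convex when $a$ is off that line and non‑constant piecewise linear when $a$ lies on it, it is never constant on a non‑degenerate interval; hence $d(\cdot,q)$, and therefore $f$ (which equals it on that sub‑segment), is not constant on any non‑degenerate sub‑segment of $e$.

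\medskip
Property (iii) follows from the greedy invariant. Because $v_u$ is the \emph{first} vertex with $\partial P[c,v_u]$ not coverable by a single geodesic unit disk, $\partial P[c,v_{u-1}]$ \emph{is} coverable, so $A=\bigcap_{q\in\overline U}D(q)$ is non‑empty; fixing any $a\in A$ and using that $D(v_{u-1})$ is one of the intersected disks (so $d(v_{u-1},a)\le 1$), we get $d(v_{u-1},q)\le d(v_{u-1},a)+d(a,q)\le 1+1=2$ for every $q\in\overline U$, i.e.\ $f(v_{u-1})\le 2$. The step I expect to need the most care is the non‑constancy argument (ii): it is precisely what rules out a whole sub‑segment of points at distance exactly $2$ from their furthest site, and it is indispensable even for the ``at most two'' half of the claim, so it has to be pinned down via the anchor‑vertex structure of the refined diagram rather than by convexity alone.
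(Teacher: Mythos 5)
Your proposal is correct, but it is organized quite differently from the paper's own argument. The paper proves Claim~\ref{claim} by a short direct contradiction: if $p_1,p_2\in e\setminus\{v_{u-1}\}$ were both at distance exactly $2$ from their respective furthest sites, with $p_1$ strictly between $v_{u-1}$ and $p_2$, it picks a site $q_1$ furthest from $p_1$, observes $d(q_1,v_{u-1})\le 2$ (through a point of $A$ -- exactly your step (iii)) and $d(q_1,p_2)\le 2$ (because $p_2$'s furthest site is at distance $2$), and then applies the \emph{strict} inequality of Lemma~\ref{pollackCon} at the interior point $p_1$ to get $d(q_1,p_1)<2$, a contradiction. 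Your route instead analyzes the upper envelope $f=\max_{q\in\overline{U}}d(\cdot,q)$, uses convexity to make $\{x\in e: f(x)\le 2\}$ a sub-segment, and adds a separate non-constancy argument via anchors. Both proofs rest on the same two ingredients (convexity of geodesic distance along $e$ from Lemma~\ref{pollackCon}, and the bound $d(q,v_{u-1})\le 2$ via $A$); the main difference is that the paper extracts the needed strictness directly from the strict inequality already stated in Lemma~\ref{pollackCon}, which would let you drop your step (ii) entirely: if $f$ were constant on a non-degenerate sub-segment $[x_1,x_2]$, taking $q$ furthest from its midpoint $x_m$ gives $f(x_m)=d(q,x_m)<\max\{d(q,x_1),d(q,x_2)\}\le f(x_m)$, a contradiction. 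This also sidesteps the slight imprecision in your plan of invoking the refinement of Section~\ref{vorDia} as if it partitioned $e$ (it refines the Voronoi edges; for your anchor argument one would really appeal to the shortest path map of the furthest site inside each cell -- standard, but extra machinery). What your framing buys is an explicit picture of the feasible set on $e$ as a closed sub-segment whose endpoints are the only possible distance-$2$ points, which cleanly explains why $v_{u-1}$, being an endpoint of $e$ contained in that sub-segment, must be one of them; the paper's version is shorter but leaves this level-set geometry implicit.
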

\vspace{-3pt}

We prove this claim using the following lemma.
\begin{lemma}[Lemma 1 \cite{pollackGeod}; see also Lemma 2.2.1 \cite{BorisFurthest}]
Given three points $a,b,c$ in a simple polygon, for $x \in \pi(b,c)$, the distance $d(a,x)$ is a convex function on $\pi(b,c)$, with  $d(a,x) < \max\{d(a,b), d(a,c)\}$.
\label{pollackCon}
\end{lemma}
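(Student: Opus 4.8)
\medskip
\noindent\textbf{Proof sketch.} The plan is to fix an arc-length parametrization $\gamma:[0,L]\to P$ of $\pi(b,c)$ with $\gamma(0)=b$ and $\gamma(L)=c$, and to study the function $f(t):=d(a,\gamma(t))$; I want to show that $f$ is convex and that, when $b\neq c$, $f(t)<\max\{f(0),f(L)\}$ on the open interval $(0,L)$. First I would recall the structure of shortest paths in a simple polygon: $\pi(a,\gamma(t))$ is a polygonal chain whose intermediate vertices are reflex vertices of $P$; its combinatorial type is constant on each of finitely many subintervals of $[0,L]$ (the \emph{pieces}), and on each piece the last vertex before $\gamma(t)$ --- the \emph{anchor} $w(t)$, with $w(t):=a$ if the chain is a single segment --- is fixed. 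On a piece where $w(t)\equiv w$ and $\gamma$ traces a single edge of $\pi(b,c)$, we have $f(t)=d(a,w)+|w-\gamma(t)|$, the distance from a fixed point to a point moving uniformly along a line, which is a $C^{1}$ convex function of $t$ with $f'(t)=\langle\gamma'(t),\,\widehat{\gamma(t)-w}\rangle$ (writing $\hat v:=v/|v|$). Hence convexity of $f$ reduces to checking at each breakpoint $t_{0}$ that $f'_{-}(t_{0})\le f'_{+}(t_{0})$.

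I would then distinguish two kinds of breakpoints. If $\gamma(t_{0})$ is \emph{not} a vertex of $\pi(b,c)$, then $\gamma'$ is continuous at $t_{0}$ and only the anchor changes; at $t_{0}$ the incoming anchor $w^{-}$, the outgoing anchor $w^{+}$ and $\gamma(t_{0})$ are collinear --- a new reflex vertex is picked up by the shortest path precisely when the segment $[w^{-},\gamma(t)]$ first grazes it --- so the two one-sided last-edge directions agree and $f$ is in fact differentiable at $t_{0}$. The substantive case is $\gamma(t_{0})=p$, a reflex vertex of $P$ lying on $\pi(b,c)$, where $\pi(b,c)$ turns with incoming unit direction $d_{\mathrm{in}}$ and outgoing unit direction $d_{\mathrm{out}}$. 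By continuity of $\pi(a,\cdot)$, on each side of $t_{0}$ the anchor is either $p$ (when $\pi(a,\gamma(t))$ bends at $p$) or the unique last vertex $w^{*}$ of $\pi(a,p)$ before $p$; if it equals $p$ on the left then $f'_{-}(t_{0})=-1$, and if it equals $p$ on the right then $f'_{+}(t_{0})=+1$, so in these sub-cases $f'_{-}(t_{0})\le f'_{+}(t_{0})$ is immediate, since every one-sided derivative of $f$ lies in $[-1,1]$. The one remaining sub-case has anchor $w^{*}\neq p$ on both sides, where $f'_{-}(t_{0})=\langle d_{\mathrm{in}},\widehat{p-w^{*}}\rangle$ and $f'_{+}(t_{0})=\langle d_{\mathrm{out}},\widehat{p-w^{*}}\rangle$, so one must prove $\langle d_{\mathrm{out}}-d_{\mathrm{in}},\,\widehat{p-w^{*}}\rangle\ge 0$; since $d_{\mathrm{out}}-d_{\mathrm{in}}$ points to the concave side of the turn, this says that $w^{*}$ lies on the convex (``outer'') side of the turn of $\pi(b,c)$ at $p$. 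I expect this to be the main obstacle. It ought to follow from tautness of $\pi(b,c)$: the exterior wedge of $P$ at the reflex vertex $p$ lies on the concave side of the turn, so if $w^{*}$ were on the concave side then one of the segments $[w^{*},\gamma(t)]$ with $t$ near $t_{0}$ --- which are sub-segments of shortest paths and hence contained in $P$ --- would be forced to cross that wedge; making this ``threading past $p$'' argument precise is the crux.

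Finally, a convex function on $[0,L]$ attains its maximum at an endpoint and is strictly below that maximum throughout $(0,L)$ unless it is constant. But $f$ cannot be constant when $b\neq c$: on the first piece it equals $d(a,w_{1})+|w_{1}-\gamma(t)|$ with $\gamma$ traversing an edge of positive length, and the distance from a fixed point to a point moving along a nondegenerate segment is not constant. Hence $d(a,x)<\max\{d(a,b),d(a,c)\}$ for every $x$ in the relative interior of $\pi(b,c)$ --- which is all that is used afterwards, the non-strict inequality at the endpoints being immediate from convexity. (The textbook route reaches the same conclusion by first reducing, via the funnel of $\pi(b,c)$ seen from $a$, to the case where $a$ is the funnel apex; the breakpoint analysis above treats the general case directly.)
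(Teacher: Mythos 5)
First, a point of comparison: the paper does not prove this lemma at all --- it is imported verbatim from Pollack--Sharir--Rote (Lemma~1 of \cite{pollackGeod}) and Aronov (Lemma~2.2.1 of \cite{BorisFurthest}), whose arguments go through convexity of the distance along a straight segment and the funnel structure, the route you mention only in your closing parenthesis. Your direct breakpoint analysis is therefore a genuinely different, more self-contained route, and most of it is sound: the piecewise formula $f(t)=d(a,w)+|w-\gamma(t)|$ with a fixed anchor, the reduction of convexity to $f'_-(t_0)\le f'_+(t_0)$ at finitely many breakpoints, the collinearity argument at anchor changes interior to an edge, the sub-cases where the anchor is $p$ itself (handled by $1$-Lipschitzness), and the derivation of the strict interior inequality from non-constancy are all correct (modulo the usual degenerate collinear configurations, which deserve a sentence).

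The gap is in the crux case you yourself flag, and it is not only that the threading argument is left imprecise: the geometric target you set for it is too weak. The inequality you correctly identify as needed, $\langle d_{\mathrm{out}}-d_{\mathrm{in}},\widehat{p-w^{*}}\rangle\ge 0$, is equivalent to $w^{*}$ lying in the closed half-plane bounded by the internal bisector line of the turn at $p$ (the line through $p$ in direction $d_{\mathrm{in}}+d_{\mathrm{out}}$), on the side away from the concave sector. That is strictly stronger than ``$w^{*}$ lies on the convex side of the turn'': the convex side is the sector of angle $\pi+\theta>\pi$ bounded by the two path edges, which is not contained in any half-plane. Concretely, if $w^{*}$ sits (nearly) on the backward extension of the incoming edge, then $f'_-(t_0)\approx 1>\cos\theta\approx f'_+(t_0)$; if it sits just on the convex side of the outgoing edge beyond $p$, then $f'_-(t_0)\approx-\cos\theta>-1\approx f'_+(t_0)$. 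Both placements are ``on the convex side,'' so proving only that statement would not yield convexity. What the wedge-blocking argument must (and, done carefully, does) deliver is the stronger fact that $w^{*}$ lies on the convex side of the supporting line of \emph{each} of the two path edges at $p$, i.e.\ in the wedge vertically opposite the concave sector: if the direction from $p$ to $w^{*}$ lay anywhere outside that opposite wedge, then one of the two families of last edges $[w^{*},\gamma(t)]$, $t\to t_0^{\pm}$, would sweep (as seen from $p$) an angular sector covering the exterior wedge at $p$, and since these segments converge to $[w^{*},p]$, their intersections with any fixed ray of that wedge tend to $p$, forcing them outside $P$ --- contradicting that they are portions of shortest paths. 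That stronger containment does imply your inequality, so the program closes, but as written the key step asserts an equivalence that is false and would leave the two sliver positions above unexcluded.
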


\begin{proof} [of Claim]
Assume that there are two points $p_1$ and $p_2$ on $e\setminus\{v_{u-1}\}$ that are at distance $2$ from their respective furthest sites, with $p_1$ closer to $v_{u-1}$ than $p_2$, thus $p_1 \in \overline{v_{u-1}p_2} \setminus \{v_{u-1}, p_2\}$.  Let $q_1$ be a center furthest away from $p_1$. Clearly $d(q_1, v_{u-1}) \leq 2$ since both $q_1$ and $v_{u-1}$ are at distance at most $1$ from any point in $A$. Since $d(q_1, p_2) \leq 2$ and $p_1 \in \overline{v_{u-1}p_2} \setminus \{v_{u-1}, p_2\}$, by Lemma~\ref{pollackCon} $d(q_1, p_1) < 2$, contradicting the assumption that $d(q_1, p_1) = 2$. \qed
\end{proof}

According to the above claim, the only other candidate for $x_1$ is $v_{u-1}$. 
Thus, if $\pi(x_1,q) \cap \partial D(q) \notin A$ we check in $O(\log |Q|)$ time if the point $v_{u-1}$ is at distance exactly $2$ from its furthest site and if so, we set $x_1$ to $v_{u-1}$. If $x_1$ exists $\partial P[c,x_1]$ can be covered with one geodesic unit disk, because the point $\pi(x_1,q) \cap \partial D(q)$ has distance exactly $1$ to $x_1$ and lies in $A$.   \\

In Step 2, to find $x_2$, we first construct the set $I$ of the disk-disk intersection points of $A$; we do this without explicitly computing $A$. To construct $I$, we look at the furthest-site geodesic Voronoi diagram of the sites $\overline{U}$ in $Q$ constructed in the Step 1. Since any point in $I$ has two points in $\overline{U}$ at distance $1$, every point in $I$ lies on a Voronoi edge. For every site $s \in \overline{U}$ we look at the refined edges of $\sigma(s)$ and for such edge $e$ we access its anchor point $a_e(s)$ as well as the distance from $s$ to the endpoints of $e$, in constant time. We test if there is a point on $e$ having distance $1$ to $s$, again in $O(1)$ time. If such a point exists then this is a disk-disk intersection point and we add it to $I$. Since we need constant time for each refined Voronoi edge, $I$ can be computed in total time $O(|Q|)$.

Having computed $I$, we construct the closest-site geodesic Voronoi diagram of the sites  $I$  in $Q$. We traverse the Voronoi vertices $\gamma_1, \ldots, \gamma_m$ on $e$, ordered in the direction from $v_{u}$  to $v_{u-1}$ and set $\gamma_{m+1} =  v_{u-1}$. For each such vertex we check whether the distance to (one of) its closest site(s) is at most $1$ again by an $O(\log n)$ time shortest path distance query. Once we find the first such vertex $\gamma_j$ on $e = \overline{v_{u-1}v_{u}}$, if it exists, we have determined a sub-segment $\overline{\gamma_j\gamma_{j-1}}$ on $e$ where $x_2$ lies. Letting $i \in I$ be the corresponding closest site, by Lemma  \ref{pollackCon}, we find the point in $\overline{\gamma_j\gamma_{j-1}}$  at distance $1$ from $i$ by computing the intersection point of a geodesic unit disk centered at $i$ with $\overline{\gamma_j\gamma_{j-1}}$, in time $O(|Q|)$, using the funnel algorithm of \cite{lineshortpath}.

There can be at most two points on $e$ that have distance exactly $1$ from $i$; if there are two such points, one of them must be $v_{u-1}$.  This can be seen directly from the fact that $d(i, v_{u-1}) \leq 1$, and Lemma~\ref{pollackCon}.  We set $x_2$ to the one closer to $v_u$. It is easy to see that $x_2$ is feasible, i.e., $\partial P[c,x_2]$ can be covered with one geodesic unit disk, because $d(i, \overline{U}) \leq 1$ and $d(i,x_2)=1$.\\

In Step 3, $c' \gets x_2$, if either $x_1$ does not exist or $d(x_2,v_u) < d(x_1,v_u)$, thus  $c'$ indeed extends $\Gamma$ maximally because $x_2$ is the point on $e$ closest to $v_u$ having distance exactly $1$ to the closest point in $I$, i.e., to the center of the geodesic unit disk placed in this iteration. Otherwise $c' \gets x_1$ and $x_1$ is the point on $e$ closest to $v_u$ having distance exactly $2$ to the furthest center in $\overline{U}$; any point on $e$ closer to $v_u$ has distance larger than $2$ from that center and is thus infeasible.\\

\noindent {\em Computational Complexity / Summary.} 
Constructing $Q$ takes time $O(|Q| \log n)$ as argued in the \textsc{TestCover}$(c, v)$ paragraph before.
Step 1 needs $O(|Q| \log |Q|)$ time to construct the geodesic furthest-site Voronoi diagram of $\overline{U}$ in $Q$ and $O(|Q| \log |Q|)$  time to find a sub-segment of the edge $e$ possibly containing $x_1$, since there are only $O(|Q|)$ Voronoi vertices in total and we spend $O(\log |Q|)$ on them for finding the sub-segment. The last step is to test if $\pi(x,q) \cap \partial D(q) \in A$, which takes time $O(|Q|)$ as argued above.

In Step 2, we spend $O(|Q|)$ time to construct the set $I$ and $O(|Q| \log |Q|)$ time to construct the geodesic closest-site Voronoi diagram of the sites $I$. We then traverse edge $e$ in $O(|Q| \log |Q|)$ time to find a sub-segment of the edge $e$ possibly containing $x_2$, and determine $x_2$ on this sub-segment in $O(|Q|)$ time.

Thus the overall time spent in \textsc{AugmentShort}  is $O(|Q|\log n)$.

\subsubsection{Total Running Time.}

Let $\mathcal{Q}$ be the set of all polygons constructed throughout the whole execution of \textsc{ContiguousGreedy}. In each polygon $Q\in \mathcal{Q}$ we spend $O(|Q|\log n)$ time in \textsc{TestCover} and possibly $O(|Q| \log n)$ time in \textsc{AugmentShort} as argued above. Since in each iteration of \textsc{ContiguousGreedy}, $\Gamma$ is extended to cover at least one new polygon vertex, there are at most $n$ iterations of the main $while$ loop. Furthermore,  covering long segments of $\partial P$ takes total time $O(k)$. Since according to Lemma \ref{log_poly}, $\sum_{Q \in \mathcal{Q}}|Q| = O(n \log n)$, the running time of \textsc{ContiguousGreedy} is $O(n \log^2 n + k)$. 
\vspace{5pt}
\begin{lemma}
$\sum_{Q \in \mathcal{Q}} |Q| = O(n \log n)$.
\label{log_poly}
\end{lemma}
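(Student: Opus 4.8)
The plan is to charge $\mathcal{Q}$ to the iterations of the $while$ loop and then telescope. Fix the $j$th iteration, let $c_j$ be the value of $c$ at its start, let $v_{i_j}$ be the first polygon vertex clockwise after $c_j$, and let $v_{u_j}$ be the vertex found in step~2. Since the endpoint $c'$ returned by \textsc{AugmentShort} lies on the edge $\overline{v_{u_j-1}v_{u_j}}$ and $v_{u_j}$ is not covered, the next iteration begins with first uncovered vertex $v_{u_j}$, so $i_{j+1}=u_j$ and the numbers $u_j-i_j$ telescope: $\sum_j(u_j-i_j)=\sum_j(i_{j+1}-i_j)=O(n)$. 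First I would note that step~1 adds no polygon to $\mathcal{Q}$ (it only places centers on a segment of $\partial P$, paid for by the $O(k)$ term); step~2 builds one polygon $Q=\partial P[c_j,v]\circ\pi(v,c_j)$ for each vertex $v$ probed by the exponential search --- the indices $i_j+1,i_j+2,i_j+4,\dots,i_j+2^{T}$ with $2^{T-1}<u_j-i_j\le 2^{T}$ --- and $O(\log(u_j-i_j))$ more during the binary search, while step~3 builds the single polygon $\partial P[c_j,v_{u_j}]\circ\pi(v_{u_j},c_j)$. Every probed index is below $i_j+2^{T}<2u_j-i_j$, so each such $Q$ has boundary part $|\partial P[c_j,v]|=O(u_j-i_j)$, and a polygon built at step $t$ of the exponential search has $|\partial P[c_j,v]|=O(2^{t})$.

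The decisive step is to show that $|\pi(v,c_j)|=O(|\partial P[c_j,v]|)$ for every polygon $Q$ the algorithm builds, so that $|Q|=|\partial P[c_j,v]|+|\pi(v,c_j)|-2=O(|\partial P[c_j,v]|)$. Granting this, the accounting is routine: in iteration $j$ the exponential-search polygons contribute $\sum_{t\ge 0}O(2^{t})=O(u_j-i_j)$, while the $O(\log(u_j-i_j))$ binary-search polygons together with the \textsc{AugmentShort} polygon contribute $O\big((u_j-i_j)\log(u_j-i_j)\big)=O\big((u_j-i_j)\log n\big)$; summing over $j$ and using $\sum_j(u_j-i_j)=O(n)$ gives $\sum_{Q\in\mathcal{Q}}|Q|=O(n\log n)$.

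To establish $|\pi(v,c_j)|=O(|\partial P[c_j,v]|)$ I would use that $Q=\partial P[c_j,v]\circ\pi(v,c_j)$ is geodesic convex in $P$, so $\pi(v,c_j)$ is the shortest path between its endpoints inside $Q$; being taut and simultaneously a boundary chain of $Q$, each interior vertex of $\pi(v,c_j)$ is a reflex vertex of $Q$, hence $\pi(v,c_j)$ turns in a single direction and is a convex polygonal chain. Its vertices are reflex vertices of $P$, each lying either on $\partial P[c_j,v]$ (at most $|\partial P[c_j,v]|$ of these) or on the complementary arc $\partial P[v,c_j]$. For the latter, decompose the convex chain into maximal pieces that run along $\partial P[v,c_j]$ and maximal ``crossing'' pieces that leave it: by tautness each crossing piece must bend around a reflex vertex of $P$ on $\partial P[c_j,v]$, for otherwise the chord replacing that piece would stay inside $Q$ and shortcut $\pi(v,c_j)$. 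Since the sub-regions of $Q$ cut off by distinct crossing pieces are pairwise disjoint, the number of crossing pieces --- and with it the number of vertices of $\pi(v,c_j)$ on $\partial P[v,c_j]$ --- is $O(|\partial P[c_j,v]|)$.

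I expect this charging to be the hard part, since a piece of $\pi(v,c_j)$ hugging $\partial P[v,c_j]$ can traverse many polygon edges with no vertex of $\partial P[c_j,v]$ nearby to pay for it; to rule this out one must use that $\partial P[c_j,v_{u_j-1}]$ is covered by a single geodesic unit disk (and that this disk is geodesic convex, so it already contains $\pi(v_{u_j-1},c_j)$), together with the fact that the arcs $\partial P[c_j,v]$ of successive iterations are essentially disjoint, so that the enclosed reflex vertices can be charged globally with $O(1)$ multiplicity. Should the clean per-polygon bound resist, the fallback is to prove only the weaker but sufficient statement $\sum_{Q\in\mathcal{Q}}|\pi(v,c_j)|=O(n\log n)$ directly --- charging the vertices of all the shortest-path chains to polygon vertices and bounding the total multiplicity via the monotone, contiguous progress of \textsc{ContiguousGreedy}.
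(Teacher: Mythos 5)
Your accounting of the $\partial$-portions is fine: the per-iteration cost of the exponential/binary probes is $O\bigl((u_j-i_j)\log n\bigr)$ and the indices telescope, so $\sum_{Q}\bigl|\partial P[c_j,v]\bigr| = O(n\log n)$; this is a legitimate (and somewhat different) route from the paper, which instead proves the stronger per-vertex statement that each vertex of $P$ lies on the $\partial$-portion of only $O(\log n)$ polygons, via the geometric-decay inequality $k-u_i \le (k-u_{i-1})/2^{q_i-1}$. The genuine gap is in your ``decisive step'': the claim $|\pi(v,c_j)| = O\bigl(|\partial P[c_j,v]|\bigr)$ is false for general pairs $(c,v)$ on a simple polygon, and your justification does not repair it. Take $\partial P[c,v]$ to be a shallow three-vertex chain $c \to w \to v$ dipping slightly below the chord $\overline{cv}$, while the complementary arc $\partial P[v,c]$ carries many thin spikes hanging down from far away, whose tips lie just below $\overline{cv}$ and are arranged concavely: the taut path $\pi(v,c)$ must pass below every tip and therefore bends at $\Theta(n)$ reflex vertices, all of them on $\partial P[v,c]$, while $|\partial P[c,v]| = 3$. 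This also shows why the ``crossing piece'' argument fails: a maximal piece of $\pi(v,c_j)$ that leaves $\partial P[v,c_j]$ need not bend around any vertex of $\partial P[c_j,v]$ at all, and the proposed shortcut chord does not stay inside $Q$ (it would cross the spikes, i.e.\ the complementary boundary); since $\pi(v,c_j)$ is itself part of $\partial Q$, ``shortcutting inside $Q$'' is not an available move.

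What is actually needed — and what the paper supplies — is a global multiplicity bound for the $\pi$-portions: if a vertex $v_k$ lies on the $\pi$-portions of two polygons but on neither of their $\partial$-portions, then the two $\partial$-portions must intersect (the paper's Observation on crossing shortest paths), hence all such polygons share a common vertex on their $\partial$-portions; combining this with the per-vertex $O(\log n)$ bound on $\partial$-portion appearances yields $|\mathcal{Q}_k| = O(\log n)$ and hence $\sum_Q|Q| = O(n\log n)$. Note that your aggregate bound $\sum_Q|\partial P[c_j,v]| = O(n\log n)$ is too weak to plug into this step — the argument needs the $O(\log n)$ bound for a \emph{fixed} vertex, which is exactly what the paper's inequality across iterations before $i^*$ provides. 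Your closing ``fallback'' sentence states the desired conclusion $\sum_Q |\pi(v,c_j)| = O(n\log n)$ but contains no argument for it, so as it stands the $\pi$-portion half of the lemma is unproved.
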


\begin{proof}
Each polygon of $\mathcal{Q}$ constructed in the \textsc{ContiguousCover} algorithm has the form $Q = \partial P[c, w]  \circ \pi(w,c)$, with $c$ an arbitrary point on $\partial P$ and $w$ a vertex of $P$. We call $\partial P[c, w]$ the $\partial$-portion and $\pi(c, w)$ the $\pi$-portion of the polygon $Q$.  Notice that  every polygon constructed in \textsc{AugmentShort} was also constructed in a \textsc{TestCover} call and thus it suffices to bound the number of polygons constructed in all \textsc{TestCover} calls.

Observe that $|\mathcal{Q}| = O(n \log n)$, since in each iteration, $\Gamma$ is extended to cover at least one new vertex, thus there are at most $n$ iteration, and in each iteration we construct $O(\log n)$ polygons during Exponential and Binary Search.
Observe that if every vertex of $P$ is contained in $O(\log n)$ polygons of $\mathcal{Q}$ then $\sum_{Q \in \mathcal{Q}} |Q| = O(n \log n)$. This holds because for each $Q \in \mathcal{Q}$ there is at most one vertex of $Q$ which is not a vertex in $P$, namely the point $c$. \\ Since $v_1$ is covered both in the first and last iteration of the algorithm, we are pessimistically bounding the number of polygons containing $v_1$ by $|\mathcal{Q}| = O(n \log n)$. To then prove the lemma it is enough to show that every vertex of $P$ except $v_1$ is contained in $O(\log n)$ polygons of $\mathcal{Q}$. For that we fix a vertex $v_k$, with $1<k \leq n$, and show that $v_k$ appears in the $\partial$-portion of $O(\log n)$ polygons and $v_k$ appears in the $\pi$-portion of $O(\log n)$ polygons of $\mathcal{Q}$.  

To bound the number of appearances of $v_k$ on the $\partial$-portion of a polygon we fix the unique iteration $i^*$ in which  $v_k$ is first covered.  Since \textsc{TestCover} is used as a predicate in Exponential and Binary search, in iteration $i^*$ it is called $O(\log n)$ times and thus $v_k$ appears in $O(\log n)$ polygons during this iteration. Observe, that in subsequent iterations, when $i>i^*$, vertex $v_k$ is not part of the $\partial$-portion of any constructed polygon. For an iteration $i < i^*$, let $v_{u_i}$ be the first uncoverable vertex (denoted by $v_u$ in the algorithm) found in iteration $i$, thus  $u_i \leq k$; let $q_i$ be the number of polygons in which $v_k$ appears on the $\partial$-portion during this iteration $i$. Also observe that $u_{i-1}$ is the index of the first  vertex of $P$ covered in iteration $i$.
We claim that 
\begin{align}
k - u_i \leq \frac{k-u_{i-1}}{2^{q_i-1}},  \text{for any} \; 1\leq i < i^*
\label{loginterval}
\end{align}
and defining $u_0 = 1$, implies that $\sum^ {i^*-1}_{i = 1} q_i  \leq \log k \leq \log n$.

For $q_i = 0$, inequality (\ref{loginterval}) holds trivially. Otherwise, since $v_k$ is not covered during this iteration, Exponential Search stops after the first time $v_k$ appears on the $\partial$-portion of a constructed polygon. This leaves a search interval of size at most $k-u_{i-1}$. During Binary Search, there are exactly $q_i-1$ search intervals which contain both $v_u$ and $v_k$. Since the interval size is halved at each step and all search intervals containing both $v_{u_i}$ and $v_k$ have size at least $k-u_i$,  inequality (\ref{loginterval}) follows.

So far we have shown that $v_k$ appears on the $\partial$-portion of $O(\log k)$ polygons in $\mathcal{Q}$ before iteration $i^*$, $O(\log n)$ times during iteration $i^*$ and does not appear in subsequent iterations.  Therefore, all together, $v_k$ appears on  $\partial$-portions of $O(\log n)$ polygons in $\mathcal{Q}$.

To bound the number of appearances of $v_k$ on the $\pi$-portion of a polygon, let $\mathcal{Q}_k \subseteq \mathcal{Q}$ be the set of polygons containing $v_k$ on their $\pi$-portion but not on the $\partial$-portion.  By Observation~\ref{obsPath} below, any two polygons in $\mathcal{Q}_k$ intersect on their $\partial$-portion because they both contain $v_k$ on their $\pi$-portion. Since by construction the $\partial$-portion of each polygon $Q$ ends with a vertex, any two polygons in $\mathcal{Q}_k$ have a vertex in common on their $\partial$-portion.  This is true because the $\partial$-portion of those polygons are subsequences of $(v_1, ..., v_{n}, v_1)$ and it is easy to see that there is a vertex $v_{k'}$ that belongs to the $\partial$-portion of  all $Q \in \mathcal{Q}_k$.  Since $v_{k'}$ appears on  $\partial$-portions of $O(\log n)$ polygons, $|\mathcal{Q}_k| = O(\log n)$.
\qed
\end{proof}

\begin{obs}
For $a,b,c,d$ four distinct points on $\partial P$, if $\pi(a,b) \cap \pi(c,d)$ contains a polygon vertex not contained in $\partial P[a,b] \cup \partial P[c,d]$, then $\partial P[a,b] \cap \partial P[c,d] \neq \emptyset$. 
\label{obsPath}
\end{obs}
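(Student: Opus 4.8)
The plan is to argue by contradiction. Assume $\partial P[a,b]\cap\partial P[c,d]=\emptyset$; I will show this is incompatible with $\pi(a,b)$ and $\pi(c,d)$ sharing a polygon vertex outside $\partial P[a,b]\cup\partial P[c,d]$. The first step is to turn the disjointness hypothesis into a statement about cyclic order: since $\partial P$ is a simple closed curve, two disjoint boundary arcs are unlinked, so the clockwise cyclic order of $a,b,c,d$ along $\partial P$ must be $(a,b,c,d)$. Let $v$ be the shared polygon vertex. Since $v\in\partial P$ but $v\notin\partial P[a,b]\cup\partial P[c,d]$, it lies in the interior of one of the complementary arcs $\partial P[b,c]$ or $\partial P[d,a]$; after the relabelling $(a,b,c,d)\mapsto(c,d,a,b)$ if necessary, I may assume $v$ lies in the interior of $\partial P[b,c]$, so the clockwise order is $(a,b,v,c,d)$ and, crucially, both $c$ and $d$ lie on the arc $\partial P[v,a]$.

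Next I would introduce the region that does the work. Because $v\in\pi(a,b)$ and $v\ne a,b$, the portion of $\pi(a,b)$ from $a$ to $v$ is $\pi(a,v)$, and I let $R$ be the weakly simple region bounded by $\pi(a,v)$ together with the boundary arc $\partial P[v,a]$. This region has exactly the form of the polygon $Q$ built in \textsc{TestCover} -- a boundary part of $P$ concatenated with a shortest path of $P$ -- so by the identical reasoning $R$ is geodesically convex in $P$: a shortest path between two points of $R$ that left $R$ could re-enter only through $\pi(a,v)$, and the subpath of $\pi(a,v)$ joining two such crossing points would, by uniqueness of shortest paths, coincide with the piece of the path lying outside $R$ -- impossible, since one lies inside $R$ and the other does not. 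As $c,d\in\partial P[v,a]\subseteq R$, geodesic convexity yields $\pi(c,d)\subseteq R$, so $\pi(c,d)$ is also the shortest $c$--$d$ path inside the polygon $R$; hence every vertex of $\pi(c,d)$ other than $c$ and $d$ is a reflex vertex of $R$.

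The contradiction will come from showing that $v$ is \emph{not} a reflex vertex of $R$. The interior angle of $R$ at $v$ is the angle between the segment of $\pi(a,v)$ incident to $v$ and the edge of $P$ incident to $v$ that lies along $\partial P[v,a]$, and I would bound it using the taut-path property of geodesics. First, $v$ must be a reflex vertex of $P$, since a shortest path can neither bend at nor pass straight through a convex vertex. Moreover $\pi(a,b)$ turns at $v$ by wrapping around the exterior wedge of $P$ at $v$ -- no corner-cutting shortcut of that turn stays inside $P$ -- so the angle $\pi(a,b)$ makes at $v$ on the side facing the interior bulk of $P$ is at least $\pi$. The two edges bounding $R$ at $v$ lie on the opposite side of this turn, the side of the exterior wedge, and the part of $P$'s interior on that side has total angle at most $\theta_v-\pi<\pi$, where $\theta_v<2\pi$ is the interior angle of $P$ at $v$; in particular the angle of $R$ at $v$ is less than $\pi$. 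Since $v\ne c,d$ and $v$ is a convex vertex of $R$, no shortest path of $R$ passes through $v$, contradicting $v\in\pi(c,d)$; therefore $\partial P[a,b]\cap\partial P[c,d]\ne\emptyset$.

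I expect this last angle argument to be the main obstacle, because its naive form is wrong: the inner sector of the turn of $\pi(a,b)$ at $v$ need not be \emph{contained} in the exterior wedge of $P$, only \emph{meet} it. The correct and still sufficient statement is that the inner sector meets the exterior wedge, which already forces every corner-cutting shortcut to leave $P$; from this one reads off that the two edges of $R$ at $v$ span a subsector of that inner sector, whose angle is hence below $\pi$. I would also note that the degenerate configurations -- $\pi(a,v)$ running along a polygon edge at $v$, or $\pi(a,b)$ passing straight through $v$ -- are handled by the same bookkeeping (or excluded under a general-position assumption), and I would re-verify that ``$c$ and $d$ lie on $\partial P[v,a]$'' is a genuine consequence of the cyclic order fixed in the first step.
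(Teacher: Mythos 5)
Your proposal is correct and ends in the same contradiction as the paper, but the decomposition is genuinely different. The paper passes to the successor vertex $w$ of $v$ on $\pi(a,b)$, cuts off a simple polygon $Q$ with the chord $\overline{vw}$, introduces the geodesically convex region bounded by $\pi(w,b)\circ\partial P[b,w]$, and then runs a three-way case analysis on where $c,d$ lie (both in $Q$, both in $R$, or split, using $\pi(c,d)=\pi(c,w)\circ\pi(w,d)$ in the split case), each branch ending with ``$v$ is a convex vertex of $Q$''. You instead take the single region bounded by $\pi(a,v)\circ\partial P[v,a]$, which after your cyclic-order normalization contains both $c$ and $d$; geodesic convexity (the same path-plus-boundary-arc fact the paper already invokes for \textsc{TestCover} and for its $R$) confines $\pi(c,d)$ to it at once, so no case analysis is needed. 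The price is that convexity of $v$ in this pinched, weakly simple region must be argued explicitly, which you do via the taut-path property: $\pi(a,b)$ subtends an angle at least $\pi$ on the interior side of its turn at the reflex vertex $v$, so the two boundary pieces of your region at $v$ span at most the interior angle of $P$ at $v$ minus $\pi$, hence strictly less than $\pi$. That is precisely the fact the paper uses implicitly when it asserts $v$ is convex in $Q$, so both arguments rest on the same key property; yours trades the paper's case analysis for one explicit local angle computation. The one step worth spelling out fully is the identification of which sector at $v$ is the interior angle of your region: locally the interior cone of $P$ at $v$ is partitioned among the three regions cut out by $\pi(a,b)$ (yours, its counterpart on the $b$-side, and the region toward $\partial P[a,b]$), and since the middle sector between the two path segments has angle at least $\pi$, your region's sector is one of the two outer ones and its angle is below $\pi$; your closing remarks sketch essentially this, and the degeneracies you flag (the path grazing $v$ without turning, or arriving along a polygon edge) are absorbed by the same inequality.
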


\begin{figure}

\definecolor{cqcqcq}{rgb}{0.75,0.75,0.75}
\center
\begin{tikzpicture}[line cap=round,line join=round,>=triangle 45,x=1.5cm,y=1.5cm]
\clip(0,0) rectangle (8,2.4);
\draw (0.4,0.46)-- (1.44,1.84)-- (1.96,1.13)-- (2.64,1.89)-- (3.29,1.47)-- (3.86,1.74)-- (3.36,1.94)-- (4.92,2.05)-- (5.47,1.41)-- (6.65,0.95);
\draw (0.33,0.25)-- (1.41,0.86)-- (2.41,0.42)-- (2.53,0.7)-- (3.57,1.2)-- (3.26,0.15)-- (4.03,0.25)-- (4.17,0.64)-- (3.83,0.53)-- (3.68,0.73)-- (4.65,1.21)-- (4.97,0.7)-- (6.08,0.67)-- (7.03,0.73);
\draw [dash pattern=on 2pt off 2pt] (0.4,0.46)-- (1.41,0.86)-- (3.57,1.2)-- (4.65,1.21)-- (6.65,0.95);
\fill[color=gray,fill=gray,fill opacity=0.1] (3.57,1.2) -- (4.65,1.21) -- (3.68,0.73) -- (3.83,0.53) -- (4.17,0.64) -- (4.03,0.25) -- (3.26,0.15) -- cycle;

\fill[pattern color=cqcqcq,fill=cqcqcq,pattern=north east lines] (4.65,1.21) -- (4.96,0.67) -- (6.08,0.67) -- (7.05,0.73) -- (6.67,0.97) -- cycle;


\begin{scriptsize}

\draw[color=black] (4.21,2.14) node {$\partial P[a,b]$};
\draw[color=black] (2.44,1.12) node {$\pi(a,b)$};

\fill [color=black] (0.4,0.46) circle (1.5pt);
\draw[color=black] (0.41,0.63) node {$a$};
\fill [color=black] (1.44,1.84) circle (1.5pt);
\fill [color=black] (1.96,1.13) circle (1.5pt);
\fill [color=black] (2.64,1.89) circle (1.5pt);
\fill [color=black] (3.29,1.47) circle (1.5pt);
\fill [color=black] (3.86,1.74) circle (1.5pt);
\fill [color=black] (3.36,1.94) circle (1.5pt);
\fill [color=black] (4.92,2.05) circle (1.5pt);
\fill [color=black] (5.47,1.41) circle (1.5pt);
\fill [color=black] (6.65,0.95) circle (1.5pt);
\draw[color=black] (6.73,1.13) node {$b$};
\fill [color=black] (0.33,0.25) circle (1.5pt);
\fill [color=black] (1.41,0.86) circle (1.5pt);
\fill [color=black] (2.41,0.42) circle (1.5pt);
\fill [color=black] (2.53,0.7) circle (1.5pt);
\fill [color=black] (3.57,1.2) circle (1.5pt);
\draw[color=black] (3.67,1.27) node {$v$};
\fill [color=black] (3.26,0.15) circle (1.5pt);
\fill [color=black] (4.03,0.25) circle (1.5pt);
\fill [color=black] (4.17,0.64) circle (1.5pt);
\draw[color=black] (4.28,0.71) node {$d$};
\fill [color=black] (3.83,0.53) circle (1.5pt);
\fill [color=black] (3.68,0.73) circle (1.5pt);
\fill [color=black] (4.65,1.21) circle (1.5pt);
\draw[color=black] (4.76,1.29) node {$w$};
\fill [color=black] (4.97,0.7) circle (1.5pt);
\fill [color=black] (6.08,0.67) circle (1.5pt);
\draw[color=black] (6.2,0.74) node {$c$};
\fill [color=black] (7.03,0.73) circle (1.5pt);
\draw[color=black] (5.36,0.93) node {$R$};
\draw[color=black] (5.36,0.53) node {$\partial P[c,d]$};
\draw[color=black] (3.86,0.99) node {$Q$};
\end{scriptsize}
\end{tikzpicture}
\caption{Illustration of the proof of Observation~\ref{obsPath}.}
\label{pathPic}
\end{figure}
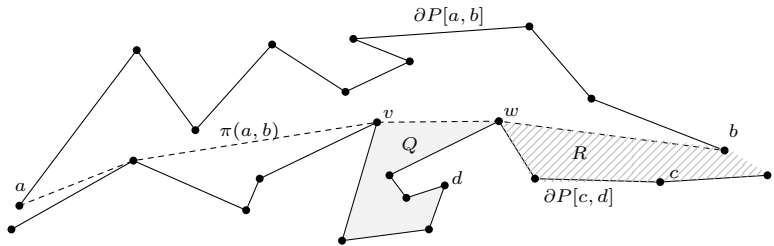

\begin{proof}

Let $v$ be a vertex contained in $\pi(a,b) \cap \pi(c,d)$  not in $\partial P[a,b] \cup \partial P[c,d]$ and assume for contradiction that $\partial P[a,b] $ and $ \partial P[c,d]$ are disjoint. 
Then either $\partial P[c,d] \subset \partial P[a,v]$ or $\partial P[c,d] \subset \partial P[v,b]$. W.l.o.g.  assume  $\partial P[c,d] \subset \partial P[v,b]$. For $w$ the successor vertex of $v$ in $\pi(a,b)$, let $Q$ be the simple polygon bounded by $\overline{vw} \circ \partial P[w, v]$. If both $c,d$ are contained in $Q$, meaning $w \notin \partial P[c,d]$, since $v$ is a convex vertex in $Q$, it holds that $ v \notin \pi(c,d)$, a contradiction. Otherwise, let $R$ be the geodesic convex set bounded by $\pi(w, b) \circ \partial P[b, w]$. If both $c,d$ are contained in $R$, then by geodesic convexity, $\pi(c,d) \subseteq R$ and thus $v \notin \pi(c,d)$. Otherwise $d \in Q$ and $c \in R$ as shown in Fig.~\ref{pathPic}. Since in that case $Q \cap  R = \{w\}$ and $R \cup Q$ is again a geodesic convex set, $\pi(c,d) =\pi(c,w) \circ \pi(w,d)$. Again, since $v$ is a convex vertex in $Q$, $v \notin \pi(w,d)$, and thus $v$ not in $\pi(c,d)$, a contradiction.
\qed

\end{proof}

\section{Approximation Ratio}
\label{approx}

Let $OPT$ denote a set of geodesic unit disks optimally covering $\partial P$. In order to prove the $2$-approximation we prove the existence of a {\em coloring} for  $\partial P$ using $|OPT|$ distinct colors and introducing at most $\max\{2|OPT| - 2,1\}$ monochromatic boundary portions. We then show that  \textsc{ContiguousGreedy} uses at most one disk per monochromatic boundary portion (plus possibly one additional disk for the unique monochromatic boundary portion containing $v_1$), which implies the $2$-approximation factor of  \textsc{ContiguousGreedy}. 

{A {\em coloring} of $\partial P$ is a function $\gamma: \partial P \rightarrow \mathbb{N}$.  The number of colors used by $\gamma$ is defined as the cardinality of the image of $\gamma$}. A {\em block} is a connected component of $\partial P$ colored with a single color.  We let $\partial P_i$ denote the subset of the polygon boundary colored with color $i$ and we call each connected component of $ \partial P \setminus \partial P_i$ a \emph{pocket} of $\partial P$ induced by {color} $i$ (see Fig.~\ref{colorExFig}(b)).

A coloring of $\partial P$  is called {\em crossing-free} if for any two distinct colors {$i, j$, it holds that $\partial P_j$} is contained in a single pocket induced by color $i$.

For a collection $\mathcal{D} = \{D_1, \ldots, D_k\}$ of disks covering $\partial P$, a \emph{disk-coloring} of $\partial P$ w.r.t. $\mathcal{D}$ is a function $\gamma_{\mathcal{D}}: \partial P \rightarrow \{1, \ldots, k\}$, such that $\gamma(x) = i \Rightarrow x \in  D_i$, i.e., a point on $\partial P$  can only be colored with one of the indices of the disks covering it (see Fig.~\ref{colorExFig}(a)).

\definecolor{ffqqqq}{rgb}{1.0,0.0,0.0}
\definecolor{qqwuqq}{rgb}{0.0,0.39215686274509803,0.0}
\definecolor{qqqqff}{rgb}{0.0,0.0,1.0}
\definecolor{cqcqcq}{rgb}{0.7529411764705882,0.7529411764705882,0.7529411764705882}
\definecolor{qqwuqq}{rgb}{0.0,0.39215686274509803,0.0}
\definecolor{ffqqqq}{rgb}{1.0,0.0,0.0}
\definecolor{qqqqff}{rgb}{0.0,0.0,1.0}
\definecolor{orange}{rgb}{1.0,0.5,0.0}
\definecolor{brown}{rgb}{0.5,.5,0.0}
\definecolor{yqyqyq}{rgb}{0.5019607843137255,0.5019607843137255,0.5019607843137255}
\definecolor{cqcqcq}{rgb}{0.7529411764705882,0.7529411764705882,0.7529411764705882}

\vspace{-15pt}

\begin{figure}[]
\center
\begin{tabular}{cc}

 \subfloat[] {

\begin{tikzpicture}[line cap=round,line join=round,>=triangle 45,x=1.0cm,y=1.0cm,scale=1.3]
\clip(-0.15,0.2) rectangle (3.0,2);
\draw [dotted,color=qqqqff,fill=qqqqff,fill opacity=0.1] (1.6407502437867678,0.7262099843520922) circle (0.5cm);
\draw [color=qqwuqq,fill=qqwuqq,fill opacity=0.1] (2.3372699835759234,0.7288531726287284) circle (0.5cm);
\draw [dash pattern=on 1pt off 1pt on 1pt off 4pt,color=orange,fill=orange,fill opacity=0.1] (1.6481657819908717,1.4707564648272209) circle (0.5cm);
\draw [color=brown,fill=brown,fill opacity=0.1] (1.1442305039976122,0.7262099843520922) circle (0.5cm);
\draw [dash pattern=on 1pt off 1pt on 1pt off 4pt,color=ffqqqq,fill=ffqqqq,fill opacity=0.1] (0.42379009724472,0.7943597525584467) circle (0.5cm);
\draw [line width=1.2000000000000002pt,dash pattern=on 1pt off 1pt on 1pt off 4pt,color=ffqqqq] (0.5,0.5)-- (0.6967952937271886,0.5032210625048962);
\draw [line width=1.2000000000000002pt,dash pattern=on 1pt off 1pt on 1pt off 4pt,color=ffqqqq] (0.5,0.5)-- (0.0,1.0);
\draw [line width=1.2000000000000002pt,dash pattern=on 1pt off 1pt on 1pt off 4pt,color=ffqqqq] (0.0,1.0)-- (0.7246119238309283,1.00392040437221);
\draw [line width=1.2000000000000002pt,color=brown] (0.6967952937271886,0.5032210625048962)-- (1.1882224255599232,0.5032210625048962);
\draw [line width=1.2000000000000002pt,dotted,color=qqqqff] (1.1882224255599232,0.5032210625048962)-- (1.8882742831707053,0.5032210625048962);
\draw [line width=1.2000000000000002pt,dash pattern=on 1pt off 1pt on 1pt off 4pt,color=orange] (1.4942053567010596,0.9992842993549201)-- (1.499540803585664,1.2735138372423407);
\draw [line width=1.2000000000000002pt,dash pattern=on 1pt off 1pt on 1pt off 4pt,color=orange] (1.499540803585664,1.2735138372423407)-- (1.994904698568374,1.7735138372423407);
\draw [line width=1.2000000000000002pt,dash pattern=on 1pt off 1pt on 1pt off 4pt,color=orange] (1.994904698568374,1.7735138372423407)-- (1.999540803585664,1.1105508197698788);
\draw [line width=1.2000000000000002pt,color=qqwuqq] (1.999540803585664,1.1105508197698788)-- (2.0,1.0);
\draw [line width=1.2000000000000002pt,color=qqwuqq] (2.0,1.0)-- (2.5,1.0);
\draw [line width=1.2000000000000002pt,color=qqwuqq] (2.5,1.0)-- (2.5,0.5);
\draw [line width=1.2000000000000002pt,color=qqwuqq] (2.5,0.5)-- (1.8882742831707053,0.5032210625048962);
\draw [line width=1.2000000000000002pt,color=brown] (1.4942053567010596,0.9992842993549201)-- (0.7246119238309283,1.00392040437221);
\begin{scriptsize}
\draw[color=qqqqff] (1.7,0.4) node {$D_4$};
\draw[color=qqwuqq] (2.652873888891698,0.7410756695859958) node {$D_5$};
\draw[color=orange] (1.4436826803576182,1.8148374627642583) node {$D_3$};
\draw[color=brown] (1,0.4) node {$D_2$};
\draw[color=ffqqqq] (0.16677676414562995,0.68303449157636) node {$D_1$};


\end{scriptsize}

\end{tikzpicture}

} &

\subfloat[]{

\begin{tikzpicture}[line cap=round,line join=round,>=triangle 45,x=1.0cm,y=1.0cm,scale=1.3]
\clip(-0.15,-1.55) rectangle (3.0, 0);

\draw [line width=1.2000000000000002pt,dash pattern=on 1pt off 1pt on 1pt off 4pt,color=ffqqqq] (0.5,0.5 - 1.75 )-- (0.6967952937271886,0.5032210625048962   - 1.75 );
\draw [line width=1.2000000000000002pt,dash pattern=on 1pt off 1pt on 1pt off 4pt,color=ffqqqq] (0.5,0.5 - 1.75 )-- (0.0,1.0 - 1.75 );
\draw [line width=1.2000000000000002pt,dash pattern=on 1pt off 1pt on 1pt off 4pt,color=ffqqqq] (0.0,1.0 - 1.75 )-- (0.7246119238309283,1.00392040437221 - 1.75 );

\draw [line width=1.2000000000000002pt,dotted,color=qqqqff] (1.1882224255599232,0.5032210625048962 - 1.75 )-- (1.8882742831707053,0.5032210625048962 - 1.75 );
\draw [line width=1.2000000000000002pt,dash pattern=on 1pt off 1pt on 1pt off 4pt,color=orange] (1.4942053567010596,0.9992842993549201 - 1.75 )-- (1.499540803585664,1.2735138372423407 - 1.75 );
\draw [line width=1.2000000000000002pt,dash pattern=on 1pt off 1pt on 1pt off 4pt,color=orange] (1.499540803585664,1.2735138372423407 - 1.75 )-- (1.994904698568374,1.7735138372423407 - 1.75 );
\draw [line width=1.2000000000000002pt,dash pattern=on 1pt off 1pt on 1pt off 4pt,color=orange] (1.994904698568374,1.7735138372423407 - 1.75 )-- (1.999540803585664,1.1105508197698788 - 1.75 );
\draw [line width=1.2000000000000002pt,color=qqwuqq] (1.999540803585664,1.1105508197698788 - 1.75 )-- (2.0,1.0 - 1.75 );
\draw [line width=1.2000000000000002pt,color=qqwuqq] (2.0,1.0 - 1.75 )-- (2.5,1.0 - 1.75 );
\draw [line width=1.2000000000000002pt,color=qqwuqq] (2.5,1.0 - 1.75 )-- (2.5,0.5 - 1.75 );
\draw [line width=1.2000000000000002pt,color=qqwuqq] (2.5,0.5 - 1.75 )-- (1.8882742831707053,0.5032210625048962 - 1.75 );

\end{tikzpicture}

}
\end{tabular}
\caption{(a) A crossing disk-free coloring of $\partial P$. (b)  The two pockets induced by color $2$. }
\label{colorExFig}
\end{figure}
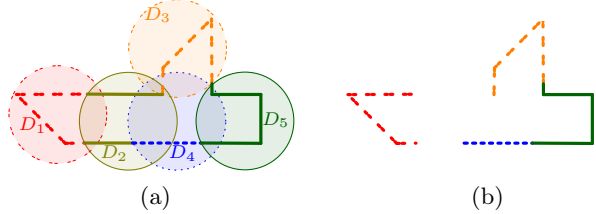

\vspace{-25pt}

\begin{definition}
For a coloring $\gamma$, two of its colors $r$ and $b$ \emph{cross} each other, if there are two pockets induced by color $r$ containing blocks of color $b$.
\end{definition}

Observe that if two colors $r$ and $b$ cross each other, there are at least two blocks $B^1_r, B^2_r$ of color $r$ and two blocks $B^1_b, B^2_b$ of color $b$ such that sequence of blocks $B^1_r, B^1_b, B^2_r, B^2_b$ occurs in clockwise order on $\partial P$ as shown in Fig. \ref{2pockets}.

\begin{lemma}
In any disk-coloring, if two colors $r$ and $b$ cross each other, one of the following holds: 1) There exists a pocket induced by color $r$ which contains blocks of color $b$ and all these blocks can be re-colored with color $r$, s.t. the resulting coloring is still a disk-coloring. 2) There exists a pocket induced by color $b$ which contains blocks of color $r$ and all these blocks can  be re-colored with color $b$, s.t. the resulting coloring is still a disk-coloring.
\label{crossLem}
\end{lemma}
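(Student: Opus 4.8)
The plan is to prove the statement by contradiction: assume that \emph{neither} option (1) nor option (2) holds, and derive a configuration that cannot occur. Throughout, write $v_r$ and $v_b$ for the centers of $D_r$ and $D_b$.

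\smallskip\noindent\textbf{Setting up the skeleton.} By the observation preceding the lemma, since $r$ and $b$ cross there are blocks $B^1_r,B^1_b,B^2_r,B^2_b$ appearing in this clockwise order on $\partial P$. On $\partial P$ the block $B^1_b$ is separated from $B^2_b$ by the $r$-colored sets $B^1_r$ and $B^2_r$, so $B^1_b$ and $B^2_b$ lie in two \emph{distinct} pockets $Y_1\ni B^1_b$ and $Y_2\ni B^2_b$ of $r$; symmetrically $B^1_r$ and $B^2_r$ lie in distinct pockets $W_1$ and $W_2$ of $b$. Since (1) fails, the pocket $Y_1$ of $r$, which contains the $b$-block $B^1_b$, must contain a $b$-colored point $q_1\notin D_r$; and since the coloring is a disk-coloring, $q_1\in D_b$. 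In the same way $Y_2$ yields $q_2\in D_b\setminus D_r$, and since (2) fails $W_1,W_2$ yield $r$-colored points $p_1,p_2\in D_r\setminus D_b$.

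\smallskip\noindent\textbf{Fixing the cyclic order.} I claim $p_1,q_1,p_2,q_2$ occur in this clockwise order on $\partial P$. Using only that a pocket of $r$ contains no $r$-colored point and a pocket of $b$ no $b$-colored point, together with the clockwise order of the four blocks, one locates $p_1$ on the open arc running clockwise from $B^2_b$ to $B^1_b$ (the one through $B^1_r$) and $q_1$ on the open arc running clockwise from $B^1_r$ to $B^2_r$ (the one through $B^1_b$); if $q_1$ preceded $p_1$ then $p_1$ would be forced strictly clockwise between $q_1$ and $B^1_b$, hence inside the connected arc $Y_1$ which also contains $q_1$ and $B^1_b$ --- impossible, as $p_1$ is $r$-colored. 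Hence $p_1$ precedes $q_1$, and the three remaining transitions follow by the rotationally symmetric argument. So the four points interleave with $p_1,p_2\in D_r\setminus D_b$ and $q_1,q_2\in D_b\setminus D_r$.

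\smallskip\noindent\textbf{The geometric obstruction.} It remains to rule out this configuration, which is the heart of the argument. Since $D_r$ and $D_b$ are geodesically convex, $\pi(p_1,p_2)\subseteq D_r$ and $\pi(q_1,q_2)\subseteq D_b$, and as the four endpoints interleave on $\partial P$ these two shortest paths cross, at a point $z\in D_r\cap D_b$. I plan to work with the geodesic triangles $\Delta_r$ and $\Delta_b$ bounded by $\pi(v_r,p_1)\cup\pi(p_1,p_2)\cup\pi(p_2,v_r)$ and by $\pi(v_b,q_1)\cup\pi(q_1,q_2)\cup\pi(q_2,v_b)$ respectively: a shortest path from $v_r$ to an interior point of $\Delta_r$ can always be extended, so $d(v_r,\cdot)$ attains its maximum over $\Delta_r$ on $\partial\Delta_r\subseteq D_r$, whence $\Delta_r\subseteq D_r$ and symmetrically $\Delta_b\subseteq D_b$. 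Because $q_1\notin D_r\supseteq\Delta_r$ while $z\in\pi(p_1,p_2)\subseteq\partial\Delta_r$, the path $\pi(q_1,q_2)$ must first enter $\Delta_r$ across one of the two sides $\pi(v_r,p_i)$; the entry point lies within distance $1$ of $v_r$ (it is in $\Delta_r$) and within distance $1$ of $v_b$ (it is on $\pi(q_1,q_2)\subseteq D_b$). Combining this entry point, the symmetric entry point of $\pi(p_1,p_2)$ into $\Delta_b$, and the convexity of the geodesic distance along shortest paths (Lemma~\ref{pollackCon}), I expect to force one of $p_1,p_2$ into $D_b$ or one of $q_1,q_2$ into $D_r$ --- the sought contradiction. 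The main obstacle is precisely this last step: bounds obtained directly from the crossing point $z$ via the triangle inequality only yield distance $2$ rather than $1$, so the argument has to exploit the planarity of $P$ and the way $\Delta_r,\Delta_b$ nest inside the two disks, not metric bookkeeping alone.
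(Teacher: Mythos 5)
Your setup (extracting witness points $q_1,q_2\in D_b\setminus D_r$ and $p_1,p_2\in D_r\setminus D_b$ from the failure of (1) and (2), and showing they interleave on $\partial P$) is sound and in fact somewhat more careful than the paper's, but the proof has a genuine gap exactly where you acknowledge it: the configuration you reach at the end is never refuted, and the route you chose does not readily refute it. Crossing $\pi(p_1,p_2)$ with $\pi(q_1,q_2)$ only produces a point $z\in D_r\cap D_b$, which is not contradictory at all, and your triangle construction does not improve matters: the entry point of $\pi(q_1,q_2)$ into $\Delta_r$ lies on $\pi(v_r,p_1)$, but it need not lie on a shortest path from $v_b$ to anything relevant, so triangle-inequality chains through it really do stall at the useless bound $2$, and Lemma~\ref{pollackCon} gives upper bounds in terms of the endpoint distances $d(v_r,q_i)>1$, i.e.\ in the wrong direction. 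So the ``geometric obstruction'' paragraph is a plan, not a proof.

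The fix (and the paper's argument) is to cross the \emph{center-to-witness} geodesics rather than the two boundary-to-boundary geodesics. Consider $\pi(c_r,p_1)\cup\pi(c_r,p_2)$ and $\pi(c_b,q_1)\cup\pi(c_b,q_2)$: because $p_1,q_1,p_2,q_2$ interleave on $\partial P$, the first curve separates $q_1$ from $q_2$ inside $P$, so the two curves intersect at some point $p$, say $p\in\pi(c_r,p_1)\cap\pi(c_b,q_1)$. Now compare the two center distances at $p$: if $d(c_r,p)\le d(c_b,p)$, then
$d(c_r,q_1)\le d(c_r,p)+d(p,q_1)\le d(c_b,p)+d(p,q_1)=d(c_b,q_1)\le 1$,
contradicting $q_1\notin D_r$; otherwise, symmetrically, $d(c_b,p_1)\le d(c_r,p_1)\le 1$, contradicting $p_1\notin D_b$. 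Note that this \emph{is} pure metric bookkeeping, but it works only because the crossing point lies on a shortest path from a center to a witness of the opposite color (so the two legs through $p$ concatenate to exactly $d(c_b,q_1)$ resp.\ $d(c_r,p_1)$), a property your crossing point $z$ and your entry points lack. With this replacement for your final paragraph, the rest of your argument goes through.
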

\begin{proof}

Suppose this is not possible. Since neither $B^1_r$ nor $B^2_r$ can be colored with $b$, there are  points $\alpha_r \in B^1_r$ and $\beta_r \in B^2_r$ which lie outside of disk $D_b$. If we denote the center of $D_b$ by $c_b$, it therefore holds that $d(c_b, \alpha_r) > 1$ and $d(c_b, \beta_r) > 1$ (see Fig. \ref{2pockets}). Analogously, there are two points $\alpha_b \in B^1_b$ and $\beta_b \in B^2_b$, s.t. $\alpha_b$ and $\beta_b$ can not be colored with color $r$. This again implies that both points lie outside of disk $D_r$ centered at $c_r$ and thus  $d(c_r, \alpha_b) > 1$ and $d(c_r, \beta_b) > 1$.

\begin{figure}
\center
\definecolor{ffqqqq}{rgb}{1,0,0}
\definecolor{cqcqcq}{rgb}{0.75,0.75,0.75}
\begin{tikzpicture}[line cap=round,line join=round,>=triangle 45,x=1.0cm,y=1.0cm, scale = 1]
\clip(-1,0.2) rectangle (3,3.9);
\draw [line width=1.2pt,color=red] (0.22,3.22)-- (0.78,3.7)-- (1.6,3.2);
\draw [line width=1.2pt,color=blue] (-0.2,2.68)-- (-0.7,1.92)-- (-0.24,0.56) ;
\draw [line width=1.2pt,color=red] (0.4,0.56)-- (1.82,0.56);
\draw [line width=1.2pt,color=blue]  (2.58,2.96)-- (2.38,2.22)-- (2.66,0.88);
\draw [dotted,color=black] (1.12,2.3)-- (0.64,3.58);
\draw [dotted,color=black] (1.12,2.3)-- (1.64,0.56);
\draw [dotted,color=black] (0.86,1.9)-- (2.64,0.96);
\draw [dotted,color=black] (0.86,1.9)-- (-0.63,1.66);
\begin{scriptsize}
\draw [fill=black] (1.12,2.3) circle (1.4pt);
\draw[color=black] (1.44,2.4) node {$c_r$};
\draw [fill=black] (0.86,1.9) circle (1.4pt);
\draw[color=black] (0.64,2.14) node {$c_b$};
\draw [fill=black] (0.64,3.58) circle (1.4pt);
\draw[color=black] (0.45,3.7) node {$\beta_r$};
\draw [fill=black] (1.64,0.56) circle (1.4pt);
\draw[color=black] (1.8,0.3) node {$\alpha_r$};
\draw [fill=black] (2.64,0.96) circle (1.4pt);
\draw[color=black] (2.74,0.66) node {$\beta_b$};
\draw [fill=black] (-0.63,1.66) circle (1.4pt);
\draw[color=black] (-0.75,1.3) node {$\alpha_b$};
\draw [fill=black] (1.32,1.66) circle (1.4pt);
\draw[color=black] (1.22,1.44) node {p};

\draw[color=black] (0.8,0.35) node {$B^1_r$};
\draw[color=black] (2.8,2.) node {$B^2_b$};
\draw[color=black] (1.5,3.5) node {$B^2_r$};
\draw[color=black] (-0.8,2.3) node {$B^1_b$};

\end{scriptsize}
\end{tikzpicture}
\caption{Illustration of the four alternating blocks $B^1_r, B^1_b, B^2_r, B^2_b$  and the corresponding points $\alpha_r, \beta_r$ and $\alpha_b$, $\beta_b$; the disk centers $c_r$ and $c_b$, as well as the intersection point $p$ of  $\pi(c_r, \alpha_r)$ and $\pi(c_b, \beta_b)$.}
\label{2pockets}
\end{figure}
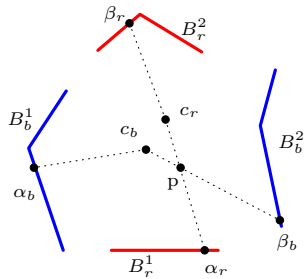

\begin{lemma}
For any collection of disks covering $\partial P$, there exists a crossing free disk-coloring of $\partial P$.
\label{crossFreeExistsLem}
\end{lemma}
\begin{proof}
Consider the four paths $\pi(c_r, \alpha_r)$, $\pi(c_r, \beta_r)$, $\pi(c_b,\alpha_b)$ and $\pi(c_b, \beta_b)$. Due to the alternating arrangement of the four blocks $B^1_r, B^1_b, B^2_r, B^2_b$  -- and therefore of $\alpha_r, \alpha_b, \beta_r, \beta_b$, on the polygon boundary, one of the paths from $c_r$ must intersect with one of the paths from $c_b$.  Assume w.l.o.g. that $\pi(c_r, \alpha_r)$ intersects $\pi(c_b, \beta_b)$ and let $p$ be an intersection point. Again, w.l.o.g., assume that $d(c_r, p) \leq d(c_b, p)$.  Then, by the triangle inequality $d(c_r, \beta_b) \leq d(c_b, \beta_b) \leq 1$ contradicting our assumption that $d(c_r, \beta_b) > 1$.
\qed
\end{proof}

We are now going to prove Lemma~\ref{crossFreeExistsLem}, which states that for any collection of disks covering $\partial P$, there exists a crossing free disk-coloring of $\partial P$.\\

For a given disk-coloring w.r.t. a collection of disks $\mathcal{D}$, we let $l_{ij}$ be the number of pockets induced by color $i$ which contain blocks of color $j$; observe that $l_{ij} = l_{ji}$. We refer to $$\sum_{1 \leq i < j \leq |\mathcal{D}|} (l_{ij} -1) $$ as the \emph{crossing number} of the disk-coloring. By definition it holds that the crossing number of a coloring is zero if and only if the coloring is crossing free. We now let $\gamma = \gamma_{\mathcal{D}}$ be a  disk-coloring of $\partial P$ w.r.t. disks $\mathcal{D}$, having minimum crossing number (over all disk-colorings w.r.t. $\mathcal{D}$). Assume for contradiction that the crossing number of $\gamma$ is not zero and let $r$ and $b$ be two colors of $\gamma$ which cross each other, i.e., $l_{rb} = l_{br} \geq 2$. Then, w.l.o.g., according to Lemma \ref{crossLem}, there exists a pocket  $\mathcal{P}_b$ induced by color $b$, in which all blocks of color $r$ can be colored with $b$ and the coloring remains a valid disk-coloring w.r.t. $\mathcal{D}$. We refer to the resulting disk-coloring as $\hat{\gamma}$ and by $\hat l_{ij}$ to the number of pockets induced by color $i$ of $\hat{\gamma}$ which contain blocks of color $j$ (again in $\hat{\gamma}$). Lastly we denote by $\hat{\mathcal{P}}_r$ the pocket induced by color $r$ in $\hat{\gamma}$, fully containing $\mathcal{P}_b$ as shown in Figure \ref{PolyDecomp}.

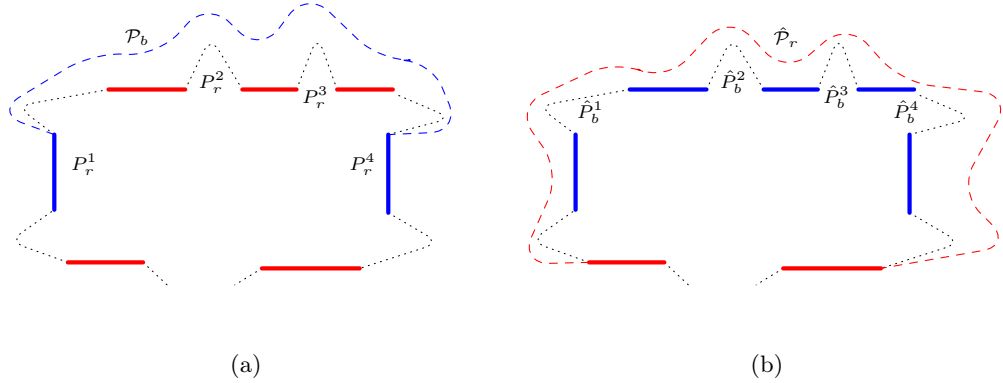
\begin{figure}[ht!]
\center
\begin{tabular}{cc}

\subfloat[]{

\begin{tikzpicture}[line cap=round,line join=round,>=triangle 45,x=1.0cm,y=1.0cm]
\clip(.3,0.25382286090537726) rectangle (6.8,5);
\draw [dotted, rounded corners=0.2cm] (1.18,1.3)-- (0.4,1.58)-- (1.0,2.0);
\draw [dotted, rounded corners=0.2cm] (1.0,3.0)-- (0.5,3.38)-- (1.72,3.6);
\draw [dotted, rounded corners=0.2cm] (2.74,3.6)-- (3.08,4.3)-- (3.5,3.6);
\draw [dotted, rounded corners=0.2cm] (4.22,3.6)-- (4.5,4.32)-- (4.76,3.6);
\draw [dotted, rounded corners=0.2cm] (5.58,3.52)-- (6.28,3.22)-- (5.44,3.0);
\draw [dotted, rounded corners=0.2cm] (5.44,1.96)-- (6.12,1.56)-- (5.06,1.22);

\draw [dotted, rounded corners=0.2cm] (2.18,1.3)-- (2.5,1);
\draw [dotted, rounded corners=0.2cm] (3.76,1.22)-- (3.4,1);

\draw [line width=1.6pt,color=red] (1.18,1.3)-- (2.18,1.3);
\draw [line width=1.6pt,color=red] (3.76,1.22)-- (5.06,1.22);
\draw [line width=1.6pt,color=red] (1.72,3.6)-- (2.74,3.6);
\draw [line width=1.6pt,color=red] (3.5,3.6)-- (4.22,3.6);
\draw [line width=1.6pt,color=red] (4.76,3.6)-- (5.5,3.6);
\draw [line width=1.6pt,color=blue] (1.0,3.0)-- (1.0,2.0);
\draw [line width=1.6pt,color=blue] (5.44,3.0)-- (5.44,1.96);
\draw [dash pattern=on 3pt off 3pt,color=blue, rounded corners=0.4cm] (1.0,3.0)-- (0.19901894358431302,3.2927164927926)-- (1.5,4.087084651455246)-- (2.3861866834341274,4.05180775242541)-- (3.038809315486088,4.774984182537042)-- (3.7443472960828026,4.087084651455246)-- (4.361693029104928,4.916091778656385)-- (5.3318077524254095,3.981253954365739)-- (6.0,4.0)-- (6.3991309738182,3.0439074739189)-- (5.44,3.0);

\begin{scriptsize}
\draw[color=black] (1.4,2.6) node {$P^1_r$};
\draw[color=black] (3.1,3.7) node {$P^2_r$};
\draw[color=black] (4.47,3.5) node {$P^3_r$};
\draw[color=black] (5.1,2.6) node {$P^4_r$};
\draw[color=black] (2.1,4.3) node {$\mathcal{P}_b$};
\end{scriptsize}
\end{tikzpicture}
}

\subfloat[]{

\definecolor{blue}{rgb}{0.0,0.0,1.0}
\definecolor{red}{rgb}{1.0,0.0,0.0}
\begin{tikzpicture}[line cap=round,line join=round,>=triangle 45,x=1.0cm,y=1.0cm]
\clip(0.3,0.25382286090537726) rectangle (7,5);
\draw [dotted, rounded corners=0.2cm] (1.18,1.3)-- (0.4,1.58)-- (1.0,2.0);
\draw [dotted, rounded corners=0.2cm] (1.0,3.0)-- (0.5,3.38)-- (1.72,3.6);
\draw [dotted, rounded corners=0.2cm] (2.74,3.6)-- (3.08,4.3)-- (3.5,3.6);
\draw [dotted, rounded corners=0.2cm] (4.22,3.6)-- (4.5,4.32)-- (4.76,3.6);
\draw [dotted, rounded corners=0.2cm] (5.58,3.52)-- (6.28,3.22)-- (5.44,3.0);
\draw [dotted, rounded corners=0.2cm] (5.44,1.96)-- (6.12,1.56)-- (5.06,1.22);

\draw [dotted, rounded corners=0.2cm] (2.18,1.3)-- (2.5,1);
\draw [dotted, rounded corners=0.2cm] (3.76,1.22)-- (3.4,1);

\draw [line width=1.6pt,color=red] (1.18,1.3)-- (2.18,1.3);
\draw [line width=1.6pt,color=red] (3.76,1.22)-- (5.06,1.22);
\draw [line width=1.6pt,color=blue] (1.72,3.6)-- (2.74,3.6);
\draw [line width=1.6pt,color=blue] (3.5,3.6)-- (4.22,3.6);
\draw [line width=1.6pt,color=blue] (4.76,3.6)-- (5.5,3.6);
\draw [line width=1.6pt,color=blue] (1.0,3.0)-- (1.0,2.0);
\draw [line width=1.6pt,color=blue] (5.44,3.0)-- (5.44,1.96);
\draw [dash pattern=on 3pt off 3pt,color=red, rounded corners=0.4cm] (1.18,1.3)-- (0.2525253639975837,1.277456268406644)-- (0.8,2.5)-- (0.19431848059835477,3.366186683434123)-- (1.5518880213785131,4)-- (2.1921637387700317,3.7)-- (3.0458646952920563,4.6)-- (3.802554179482032,3.8)-- (4.617450547071237,4.5)--  (5.323006439098885,3.7)-- (6.8,3.52128817532)-- (6.09202492651837,2.245630522097821)-- (6.809909821775527,1.5277456268406644)-- (5.06,1.22);

\begin{scriptsize}
\draw[color=black] (1.2,3.3) node {$\hat{P}^1_b$};
\draw[color=black] (3.1,3.7) node {$\hat{P}^2_b$};
\draw[color=black] (4.47,3.5) node {$\hat{P}^3_b$};
\draw[color=black] (5.4,3.3) node {$\hat{P}^4_b$};
\draw[color=black] (3.8,4.3) node {$\hat {\mathcal{P}}_r$};
\end{scriptsize}
\end{tikzpicture}
}

\end{tabular}
\caption{(a) Illustration of $\mathcal{P}_b$ in the disk-coloring $\gamma$. (b)  Illustration of $\hat {\mathcal{P}}_r$ in the disk-coloring $\hat \gamma$ .}
\label{PolyDecomp}
\end{figure}

We are going to show that $\hat{\gamma}$ has a smaller crossing number than $\gamma$, thus contradicting the assumption that $\gamma$ is the disk-coloring with minimum crossing number. For this, we extend the definition of $l_{ij}$ to parts of the polygon boundary: for a contiguous subset $\partial Q$ of $\partial P$, we denote by $l_{ij}[\partial Q]$ the number of pockets induced by color $i$ which are fully contained in $\partial Q$ and which contain blocks of color $j$.

For the rest of the proof, let $k$ be an arbitrary color of $\gamma$ (and thus also of $\hat \gamma$). Since every pocket induced by color $r$ in $\gamma$ (and in $\hat \gamma$) is either contained in $\hat {\mathcal{P}}_r$ or in $\partial P \setminus \hat {\mathcal{P}}_r$, it holds that 
\begin{align}
l_{rk} = l_{rk}[\hat {\mathcal{P}}_r] + l_{rk}[\partial P \setminus \hat {\mathcal{P}}_r] 
\hspace{10pt} \text{  and  } \hspace{10pt}  
\hat  l_{rk} = \hat l_{rk}[\hat {\mathcal{P}}_r] + \hat  l_{rk}[\partial P \setminus \hat {\mathcal{P}}_r].
\label{sameOutR}
\end{align}
Similarly, it holds that
\begin{align}
l_{bk} = l_{bk}[\mathcal{P}_b] + l_{bk}[\partial P \setminus \mathcal{P}_b]
\hspace{10pt} \text{  and  } \hspace{10pt}
\hat l_{bk} = \hat l_{rb}[\mathcal{P}_b] + \hat l_{bk}[\partial P \setminus \mathcal{P}_b].
\label{sameOutB}
\end{align}
Furthermore, since $\hat \gamma$ does not differ from $\gamma$ in $\partial P \setminus \mathcal{P}_b$, it holds that 
\begin{align}
\hat l_{bk}[\partial P \setminus \mathcal{P}_b] =  l_{bk}[ \partial P \setminus \mathcal{P}_b]
\label{sameHatB}
\end{align}
and analogously, since $(\partial P \setminus \hat {\mathcal{P}}_r)  \subseteq (\partial P \setminus \mathcal{P}_b)$, it holds that
\begin{align}
\hat l_{rk}[ \partial P \setminus \hat{ \mathcal{P}}_r] =  l_{rk}[ \partial P \setminus \hat {\mathcal{P}}_r].
\label{sameHatR}
\end{align}
Next, we are going to show that 
\begin{align}
\hat l_{rk}[\hat{ \mathcal{P}}_r] + \hat l_{bk}[{ \mathcal{P}}_b]  \leq  l_{rk}[\hat{ \mathcal{P}}_r] + l_{bk}[{ \mathcal{P}}_b].
\label{eqIneq}
\end{align}
We are going to prove this by distinguishing two cases:
1) $\hat l_{rk}[\hat{ \mathcal{P}}_r] > l_{bk}[\mathcal{P}_b]$. Since in $\hat \gamma$, by definition $\hat{ \mathcal{P}}_r$ is a single pocket induced by $r$, it follows that $\hat l_{rk}[\hat{ \mathcal{P}}_r] = 1$ and thus  $l_{bk}[\mathcal{P}_b] = 0$. Observe that $l_{bk}[\mathcal{P}_b] = 0$ means that no block of color $k$ was present in $\mathcal{P}_b$ in the $\gamma$ coloring, and this implies that $\hat l_{bk}[\mathcal{P}_b] = 0$. Furthermore, since a block of color $k$ appears inside $\hat{ \mathcal{P}}_r$ in the coloring $\hat \gamma$, a block of color $k$ appeared inside $\hat{ \mathcal{P}}_r$ in the coloring $\gamma$. Thus it holds that $ l_{rk}[\hat{ \mathcal{P}}_r] \geq 1$  which together establishes Eq.~(\ref{eqIneq}). 2) $\hat l_{rk}[\hat{ \mathcal{P}}_r] \leq l_{bk}[\mathcal{P}_b]$. We only need to show that  $\hat{l}_{bk}[\mathcal{P}_b] \leq  l_{rk}[\hat{  \mathcal{P}}_r]$. To see this, let $P^1_r, \ldots, P^t_r$ be the pockets induced by color $r$ in $\gamma$, which are contained in $\hat{ \mathcal{P}}_r$ (ordered clockwise).  Observe that since in $\hat \gamma$ each block in $\mathcal{P}_b$ which was of color $r$ in $\gamma$ gets colored with color $b$, there are again exactly $t$ such pockets $\hat  P^1_b, \ldots, \hat  P^t_b$  induced by color $b$ in $\hat \gamma$ which are fully contained in $\mathcal{P}_b$. Next, observe that for any $1 \leq p \leq t$ it holds that $\hat  P^p_b \subseteq P^p_r$. Thus if in $\hat \gamma$ a block of color $k$ is contained in a pocket $\hat P^p_b$ then this block was contained in pocket $P^p_r$ in $\gamma$. This indeed implies that $\hat{l}_{bk}[\mathcal{P}_b] \leq  l_{rk}[\hat{  \mathcal{P}}_r]$ proving Eq.~(\ref{eqIneq}) for this second case.

Using Eq.~(\ref{sameOutR}) - (\ref{eqIneq}), we obtain

\begin{align*}
\hat l_{rk} + \hat l_{bk}  
&\stackrel{ (\ref{sameOutR}),  (\ref{sameOutB})  }= \hat l_{rk}[\hat{ \mathcal{P}}_r] + \hat  l_{rk}[\partial P \setminus \hat{ \mathcal{P}}_r] + \hat l_{bk}[{ \mathcal{P}}_b] + \hat  l_{bk}[\partial P \setminus { \mathcal{P}}_b] \\
&\stackrel{ (\ref{sameHatB}),  (\ref{sameHatR})  }=  \hat l_{rk}[\hat{ \mathcal{P}}_r] +   l_{rk}[\partial P \setminus \hat{ \mathcal{P}}_r] + 
\hat l_{bk}[{ \mathcal{P}}_b] +  l_{bk}[\partial P \setminus { \mathcal{P}}_b] \\
&\stackrel{ (\ref{eqIneq}) }\leq  l_{rk}[\hat{ \mathcal{P}}_r] +   l_{rk}[\partial P \setminus \hat{ \mathcal{P}}_r] + 
l_{bk}[{ \mathcal{P}}_b] +  l_{bk}[\partial P \setminus { \mathcal{P}}_b] \\
&\stackrel{ (\ref{sameOutR}),  (\ref{sameOutB}) } = l_{rk} +  l_{bk}.
\end{align*}

Furthermore, since color $k$ was chosen arbitrarily, it holds that

\begin{align*}
\sum_{\substack{i \in \{1, \ldots, |\mathcal{D}|\}   \\ \setminus \{r,b\}}} (\hat l_{ri} + \hat l_{bi})    \leq  \sum_{\substack{i \in \{1, \ldots, |\mathcal{D}|\}   \\ \setminus \{r,b\}}} ( l_{ri} +  l_{bi}).
\end{align*}
Because we colored all blocks of color $r$ in $\mathcal{P}_b$ by color $b$, it follows that $\hat{l}_{rb} = l_{rb} -1$ and it thus indeed holds that 
 \begin{align*}
\sum_{1 \leq i < j \leq |\mathcal{D}|} \hat l_{ij} < \sum_{1 \leq i < j \leq |\mathcal{D}|} l_{ij},
\end{align*}
contradicting the assumption that $\gamma$ has the smallest crossing number.

\qed

\end{proof}

\begin{lemma}
For a {crossing-free coloring $\gamma$ using $\kappa$ colors}, let  $\Pi_\gamma$ be the set of blocks induced by  $\gamma$. If $\kappa > 1$ then $|\Pi_\gamma| \leq 2 (\kappa - 1)$.
\label{indLem}
\end{lemma}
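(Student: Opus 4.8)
The plan is to prove the bound by induction on the number of colors $\kappa$, exploiting the fact that a crossing-free coloring induces a laminar (tree-like) nesting structure on the colors via their pockets. First I would set up the base case $\kappa = 2$: with two colors $r$ and $b$ which do not cross, all blocks of color $b$ lie in a single pocket induced by color $r$, so color $r$ contributes at most two blocks (one on each side of that pocket, or just one), and color $b$ contributes at most... here one must be a little careful, but the alternation of blocks around $\partial P$ forces $|\Pi_\gamma| \le 2 = 2(\kappa-1)$. The cleaner route is probably to argue directly about the cyclic sequence of blocks around $\partial P$: consecutive blocks must have different colors, and I would show that each ``color change'' can be charged so that the total number of blocks is at most $2(\kappa-1)$.

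The key structural tool I would extract is this: since $\gamma$ is crossing-free, for any two colors $i \ne j$ all blocks of color $j$ sit inside one pocket of color $i$; this makes the relation ``color $j$ is nested below color $i$'' well-defined, and one can pick a color $i_0$ that is ``outermost'' — i.e., such that all other colors lie inside a single pocket of $i_0$, equivalently, the complement of the union of the other colors' blocks, restricted to where those colors live, is one arc. Then I would consider the blocks of color $i_0$: they partition $\partial P$ into pockets, and because $\gamma$ is crossing-free every other color lives in exactly one such pocket. Removing color $i_0$'s blocks and recursing on each pocket that contains at least one other color gives the induction. If $i_0$ has $m$ blocks, these $m$ blocks lie in at most $m$ pockets each of which, after deletion, is still crossing-free; summing the inductive bounds over the pockets and adding back the $m$ blocks of color $i_0$ yields $|\Pi_\gamma| \le m + \sum (\text{bounds per pocket})$, and one checks the arithmetic closes to $2(\kappa-1)$ using that the colors are partitioned among the non-trivial pockets.

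The arithmetic of the induction is the delicate point and I would handle it as follows. Say $i_0$ has $m \ge 1$ blocks, inducing $m$ pockets $\rho_1,\dots,\rho_m$ (cyclically); let $\kappa_t \ge 0$ be the number of colors other than $i_0$ appearing in $\rho_t$, so $\sum_t \kappa_t = \kappa - 1$ and each $\rho_t$ with $\kappa_t \ge 1$ carries a crossing-free coloring with $\kappa_t$ colors hence at most $2(\kappa_t - 1)$ blocks by induction (and $0$ blocks if $\kappa_t = 0$). Let $s$ be the number of pockets with $\kappa_t \ge 1$; then
\begin{align*}
|\Pi_\gamma| &\le m + \sum_{t : \kappa_t \ge 1} 2(\kappa_t - 1) = m + 2(\kappa - 1) - 2s.
\end{align*}
So it suffices to show $m \le 2s$. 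This is where the main obstacle lies, and the resolution is a parity/alternation argument on the cyclic sequence of blocks: each of the $m$ blocks of $i_0$ is bounded on its two sides by blocks of other colors (since adjacent blocks differ in color and $\kappa > 1$), and two consecutive $i_0$-blocks $B, B'$ that bound a pocket $\rho_t$ with $\kappa_t = 0$ would force the block strictly between them to also have color $i_0$ — contradiction — unless that ``pocket'' is empty, which cannot happen between two distinct $i_0$-blocks. Hence every pocket between two consecutive $i_0$-blocks is non-trivial, giving $m \le s$ when $m \ge 2$, and the degenerate case $m = 1$ is handled separately (one pocket, $\kappa - 1$ colors, at most $1 + 2(\kappa-2) = 2\kappa - 3 \le 2(\kappa-1)$ blocks). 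The hard part will be stating cleanly why ``a pocket of $i_0$ between two consecutive $i_0$-blocks is never color-empty,'' which follows because a pocket is by definition a connected component of $\partial P \setminus \partial P_{i_0}$ and any such component, lying strictly between two $i_0$-blocks on the boundary cycle, must contain at least one block of some other color.
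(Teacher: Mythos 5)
Your overall plan (induct on the number of colors, pick a color, recurse on its pockets) is the same as the paper's, but the inductive estimate you apply inside each pocket is false, and this breaks the arithmetic. A pocket is an \emph{arc} of $\partial P$, and the lemma you are inducting on is stated for colorings of the whole closed boundary; applied to an arc with $\kappa_t$ colors the correct bound is not $2(\kappa_t-1)$ but $2\kappa_t-1$. Concretely: a pocket colored with a single color ($\kappa_t=1$) contains one block, not $0$; a pocket colored $j,k,j$ ($\kappa_t=2$) contains three blocks, not $2$. The cyclic coloring $i_0,j,k,j$ is crossing-free with $\kappa=3$ and has $4=2(\kappa-1)$ blocks, yet your $m=1$ accounting caps it at $1+2(\kappa-2)=3$; so your displayed inequality $|\Pi_\gamma|\leq m+2(\kappa-1)-2s$ is simply not available, and the whole ``show $m\leq 2s$'' program, including the alternation/parity discussion, is an attempt to patch an inequality that was never valid. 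A second, related gap is that you never justify invoking the induction hypothesis on an arc at all (nor the existence of your ``outermost'' color $i_0$, which in any case is unnecessary).

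The paper closes both gaps with one move that your write-up is missing: for each pocket $\mathcal{P}_j$ it recolors everything \emph{outside} the pocket with the chosen color $i$, obtaining a genuine coloring $\gamma_j$ of the closed boundary with $\kappa_j=\kappa_t+1$ colors. To guarantee $\kappa_j<\kappa$ it first disposes of the trivial case in which every color has at most one block and then picks $i$ with at least two blocks, so each pocket misses at least one other color. Induction then gives $|\Pi_{\gamma_j}|\leq 2\kappa_t$, and since $\Pi_{\gamma_j}$ is the set of blocks inside $\mathcal{P}_j$ plus exactly one merged exterior block, each pocket contains at most $2\kappa_t-1$ blocks. Summing, $|\Pi_\gamma|\leq m+\sum_t(2\kappa_t-1)=m+2(\kappa-1)-m=2(\kappa-1)$, where one only needs that every pocket is nonempty and colored with non-$i$ colors (so your $s$ equals $m$ automatically; no $m\leq 2s$ argument is needed). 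If you replace your per-pocket bound by this recoloring step, your proof becomes essentially the paper's.
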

\begin{proof}

We prove the lemma by induction on the number  of colors. For $\kappa=2$, since $\gamma$ is crossing-free it is easy to see that $|\Pi_\gamma| \leq 2$ and thus the lemma holds.  Assuming the lemma holds for $\kappa - 1$ colors, we show it also holds for $\kappa$ colors.  {For any color $i$ used by $\gamma$, let $\mathcal{B}_i \subseteq \Pi_\gamma$ be the set of blocks of color $i$}.  If for all $i$, $|\mathcal{B}_i| \leq 1$, the lemma trivially holds.  Otherwise fix $i$ to be a color with $|\mathcal{B}_i| \geq 2$ and observe that the number of pockets induced by color $i$ is $|\mathcal{B}_i|$.

\definecolor{ffqqqq}{rgb}{1.0,0.0,0.0}
\definecolor{qqwuqq}{rgb}{0.0,0.39215686274509803,0.0}
\definecolor{qqqqff}{rgb}{0.0,0.0,1.0}
\definecolor{cqcqcq}{rgb}{0.7529411764705882,0.7529411764705882,0.7529411764705882}
\definecolor{qqwuqq}{rgb}{0.0,0.39215686274509803,0.0}
\definecolor{ffqqqq}{rgb}{1.0,0.0,0.0}
\definecolor{qqqqff}{rgb}{0.0,0.0,1.0}
\definecolor{orange}{rgb}{1.0,0.5,0.0}
\definecolor{brown}{rgb}{0.5,.5,0.0}
\definecolor{yqyqyq}{rgb}{0.5019607843137255,0.5019607843137255,0.5019607843137255}
\definecolor{cqcqcq}{rgb}{0.7529411764705882,0.7529411764705882,0.7529411764705882}

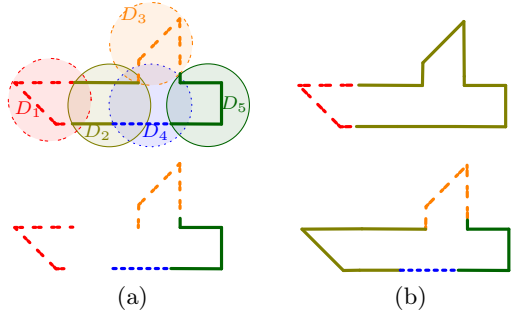
\begin{figure}[ht!]
\center
\begin{tabular}{cc}
\subfloat[]{
\begin{tikzpicture}[line cap=round,line join=round,>=triangle 45,x=1.0cm,y=1.0cm,scale=1.1]
\clip(-0.15,-1.3) rectangle (3.0,2.2);
\draw [dotted,color=qqqqff,fill=qqqqff,fill opacity=0.1] (1.6407502437867678,0.7262099843520922) circle (0.5cm);
\draw [color=qqwuqq,fill=qqwuqq,fill opacity=0.1] (2.3372699835759234,0.7288531726287284) circle (0.5cm);
\draw [dash pattern=on 1pt off 1pt on 1pt off 4pt,color=orange,fill=orange,fill opacity=0.1] (1.6481657819908717,1.4707564648272209) circle (0.5cm);
\draw [color=brown,fill=brown,fill opacity=0.1] (1.1442305039976122,0.7262099843520922) circle (0.5cm);
\draw [dash pattern=on 1pt off 1pt on 1pt off 4pt,color=ffqqqq,fill=ffqqqq,fill opacity=0.1] (0.42379009724472,0.7943597525584467) circle (0.5cm);
\draw [line width=1.2000000000000002pt,dash pattern=on 1pt off 1pt on 1pt off 4pt,color=ffqqqq] (0.5,0.5)-- (0.6967952937271886,0.5032210625048962);
\draw [line width=1.2000000000000002pt,dash pattern=on 1pt off 1pt on 1pt off 4pt,color=ffqqqq] (0.5,0.5)-- (0.0,1.0);
\draw [line width=1.2000000000000002pt,dash pattern=on 1pt off 1pt on 1pt off 4pt,color=ffqqqq] (0.0,1.0)-- (0.7246119238309283,1.00392040437221);
\draw [line width=1.2000000000000002pt,color=brown] (0.6967952937271886,0.5032210625048962)-- (1.1882224255599232,0.5032210625048962);
\draw [line width=1.2000000000000002pt,dotted,color=qqqqff] (1.1882224255599232,0.5032210625048962)-- (1.8882742831707053,0.5032210625048962);
\draw [line width=1.2000000000000002pt,dash pattern=on 1pt off 1pt on 1pt off 4pt,color=orange] (1.4942053567010596,0.9992842993549201)-- (1.499540803585664,1.2735138372423407);
\draw [line width=1.2000000000000002pt,dash pattern=on 1pt off 1pt on 1pt off 4pt,color=orange] (1.499540803585664,1.2735138372423407)-- (1.994904698568374,1.7735138372423407);
\draw [line width=1.2000000000000002pt,dash pattern=on 1pt off 1pt on 1pt off 4pt,color=orange] (1.994904698568374,1.7735138372423407)-- (1.999540803585664,1.1105508197698788);
\draw [line width=1.2000000000000002pt,color=qqwuqq] (1.999540803585664,1.1105508197698788)-- (2.0,1.0);
\draw [line width=1.2000000000000002pt,color=qqwuqq] (2.0,1.0)-- (2.5,1.0);
\draw [line width=1.2000000000000002pt,color=qqwuqq] (2.5,1.0)-- (2.5,0.5);
\draw [line width=1.2000000000000002pt,color=qqwuqq] (2.5,0.5)-- (1.8882742831707053,0.5032210625048962);
\draw [line width=1.2000000000000002pt,color=brown] (1.4942053567010596,0.9992842993549201)-- (0.7246119238309283,1.00392040437221);
\begin{scriptsize}
\draw[color=qqqqff] (1.7,0.4) node {$D_4$};
\draw[color=qqwuqq] (2.652873888891698,0.7410756695859958) node {$D_5$};
\draw[color=orange] (1.4436826803576182,1.8148374627642583) node {$D_3$};
\draw[color=brown] (1,0.4) node {$D_2$};
\draw[color=ffqqqq] (0.16677676414562995,0.68303449157636) node {$D_1$};


\end{scriptsize}

\draw [line width=1.2000000000000002pt,dash pattern=on 1pt off 1pt on 1pt off 4pt,color=ffqqqq] (0.5,0.5 - 1.75 )-- (0.6967952937271886,0.5032210625048962   - 1.75 );
\draw [line width=1.2000000000000002pt,dash pattern=on 1pt off 1pt on 1pt off 4pt,color=ffqqqq] (0.5,0.5 - 1.75 )-- (0.0,1.0 - 1.75 );
\draw [line width=1.2000000000000002pt,dash pattern=on 1pt off 1pt on 1pt off 4pt,color=ffqqqq] (0.0,1.0 - 1.75 )-- (0.7246119238309283,1.00392040437221 - 1.75 );

\draw [line width=1.2000000000000002pt,dotted,color=qqqqff] (1.1882224255599232,0.5032210625048962 - 1.75 )-- (1.8882742831707053,0.5032210625048962 - 1.75 );
\draw [line width=1.2000000000000002pt,dash pattern=on 1pt off 1pt on 1pt off 4pt,color=orange] (1.4942053567010596,0.9992842993549201 - 1.75 )-- (1.499540803585664,1.2735138372423407 - 1.75 );
\draw [line width=1.2000000000000002pt,dash pattern=on 1pt off 1pt on 1pt off 4pt,color=orange] (1.499540803585664,1.2735138372423407 - 1.75 )-- (1.994904698568374,1.7735138372423407 - 1.75 );
\draw [line width=1.2000000000000002pt,dash pattern=on 1pt off 1pt on 1pt off 4pt,color=orange] (1.994904698568374,1.7735138372423407 - 1.75 )-- (1.999540803585664,1.1105508197698788 - 1.75 );
\draw [line width=1.2000000000000002pt,color=qqwuqq] (1.999540803585664,1.1105508197698788 - 1.75 )-- (2.0,1.0 - 1.75 );
\draw [line width=1.2000000000000002pt,color=qqwuqq] (2.0,1.0 - 1.75 )-- (2.5,1.0 - 1.75 );
\draw [line width=1.2000000000000002pt,color=qqwuqq] (2.5,1.0 - 1.75 )-- (2.5,0.5 - 1.75 );
\draw [line width=1.2000000000000002pt,color=qqwuqq] (2.5,0.5 - 1.75 )-- (1.8882742831707053,0.5032210625048962 - 1.75 );

\end{tikzpicture}
}
& \subfloat[] {
\begin{tikzpicture}[line cap=round,line join=round,>=triangle 45,x=1.0cm,y=1.0cm,scale=1.1]
\clip(0.0,.25) rectangle (3.0,3.4);
\draw [line width=1.2pt,color=brown] (0.6859854505161606,0.2804467947854956)-- (0.8827807442433493,0.2836678572903919);
\draw [line width=1.2pt,color=brown] (0.6859854505161606,0.2804467947854956)-- (0.18598545051616067,0.7804467947854961);
\draw [line width=1.2pt,color=brown] (0.18598545051616067,0.7804467947854961)-- (0.9105973743470896,0.7843671991577068);
\draw [line width=1.2pt,color=brown] (0.8827807442433493,0.2836678572903919)-- (1.3742078760760807,0.2836678572903919);
\draw [line width=1.2000000000000002pt,dotted,color=qqqqff] (1.3742078760760807,0.2836678572903919)-- (2.0742597336868616,0.2836678572903919);
\draw [line width=1.2000000000000002pt,dash pattern=on 1pt off 1pt on 2pt off 4pt,color=orange] (1.680190807217215,0.7797310941404167)-- (1.6855262541018194,1.0539606320278376);
\draw [line width=1.2000000000000002pt,dash pattern=on 1pt off 1pt on 2pt off 4pt,color=orange] (1.6855262541018194,1.0539606320278376)-- (2.180890149084529,1.5539606320278372);
\draw [line width=1.2000000000000002pt,dash pattern=on 1pt off 1pt on 2pt off 4pt,color=orange] (2.180890149084529,1.5539606320278372)-- (2.1855262541018194,0.8909976145553751);
\draw [line width=1.2000000000000002pt,color=qqwuqq] (2.1855262541018194,0.8909976145553751)-- (2.1859854505161596,0.7804467947854961);
\draw [line width=1.2000000000000002pt,color=qqwuqq] (2.1859854505161596,0.7804467947854961)-- (2.6859854505161604,0.7804467947854961);
\draw [line width=1.2000000000000002pt,color=qqwuqq] (2.6859854505161604,0.7804467947854961)-- (2.6859854505161604,0.2804467947854956);
\draw [line width=1.2000000000000002pt,color=qqwuqq] (2.6859854505161604,0.2804467947854956)-- (2.0742597336868616,0.2836678572903919);
\draw [line width=1.2pt,color=brown] (1.680190807217215,0.7797310941404167)-- (0.9105973743470896,0.7843671991577068);
\draw [line width=1.2000000000000002pt,dash pattern=on 1pt off 1pt on 2pt off 4pt,color=ffqqqq] (0.6473590465507528,2.018634973229065)-- (0.8441543402779409,2.021856035733961);
\draw [line width=1.2000000000000002pt,dash pattern=on 1pt off 1pt on 2pt off 4pt,color=ffqqqq] (0.6473590465507528,2.018634973229065)-- (0.14735904655075305,2.518634973229065);
\draw [line width=1.2000000000000002pt,dash pattern=on 1pt off 1pt on 2pt off 4pt,color=ffqqqq] (0.14735904655075305,2.518634973229065)-- (0.8719709703816804,2.522555377601276);
\draw [line width=1.2pt,color=brown] (0.8441543402779409,2.021856035733961)-- (1.3355814721106687,2.021856035733961);
\draw [line width=1.2pt,color=brown] (1.3355814721106687,2.021856035733961)-- (2.0356333297214486,2.021856035733961);
\draw [line width=1.2pt,color=brown] (1.641564403251802,2.5179192725839856)-- (1.6468998501364065,2.7921488104714065);
\draw [line width=1.2pt,color=brown] (1.6468998501364065,2.7921488104714065)-- (2.142263745119116,3.292148810471407);
\draw [line width=1.2pt,color=brown] (2.142263745119116,3.292148810471407)-- (2.1468998501364065,2.6291857929989444);
\draw [line width=1.2pt,color=brown] (2.1468998501364065,2.6291857929989444)-- (2.1473590465507466,2.518634973229065);
\draw [line width=1.2pt,color=brown] (2.1473590465507466,2.518634973229065)-- (2.6473590465507475,2.518634973229065);
\draw [line width=1.2pt,color=brown] (2.6473590465507475,2.518634973229065)-- (2.6473590465507475,2.018634973229065);
\draw [line width=1.2pt,color=brown] (2.6473590465507475,2.018634973229065)-- (2.0356333297214486,2.021856035733961);
\draw [line width=1.2pt,color=brown] (1.641564403251802,2.5179192725839856)-- (0.8719709703816804,2.522555377601276);
\begin{scriptsize}
\end{scriptsize}
\end{tikzpicture}
}
\end{tabular}
\caption{(a) A disk-coloring example w.r.t. disks $D_1, \ldots, D_5$; in the lower part the two pockets induced by color $2$ are shown.  (b) shows the two polygon colorings in the induction step for color $2$ in the proof of Lemma~\ref{indLem}.}
\label{colorExFigProof}
\end{figure}

 Let $\mathcal{P}_1, ..., \mathcal{P}_{|\mathcal{B}_i|}$ be the pockets induced by color $i$.  For each such pocket $\mathcal{P}_j$ we create a new coloring $\gamma_j$ of $\partial P$, with
 $$
 \gamma_j(x) =
  \begin{cases} 
      \hfill \gamma(x) \hfill & \text{ if $x \in \mathcal{P}_j$} \\
      \hfill i  \hfill & \text{ otherwise, }  \\
  \end{cases}
$$
as illustrated in Fig. \ref{colorExFigProof}(b).

{Since $\gamma$ is crossing-free it is easy to see that  for any pocket $\mathcal{P}_j$, the coloring $\gamma_j$ is also a crossing-free.  Denoting the number of colors of $\gamma_j$ by $\kappa_j$, it holds that $1 <  \kappa_j < \kappa$, for all $1 \leq j \leq |\mathcal{B}_i|$}.  Letting $\Pi_{\gamma_j}$ be the {set of blocks} induced by the coloring $\gamma_j$, by induction hypothesis $|\Pi_{\gamma_j}| \leq 2(\kappa_j-1)$.  

Observe that each $\Pi_{\gamma_j}$ contains exactly one block not in $\Pi_\gamma$.  Also, the blocks in $\mathcal{B}_i$ are exactly the blocks not appearing in any of the $\Pi_{\gamma_j}$.  Therefore, since the number of pockets induced by color $i$ equals the number of blocks in {$\mathcal{B}_i$},
it holds that $|\Pi_\gamma| =  \sum_{j=1}^{|\mathcal{B}_i|}  |\Pi_{\gamma_j}|$. Thus we obtain

\begin{equation}
|\Pi_\gamma| =  \sum_{j=1}^{|\mathcal{B}_i|}  |\Pi_{\gamma_j}| \leq 2\sum_{j=1}^{|\mathcal{B}_i|}  (\kappa_j - 1)
\end{equation}
 
\noindent and because each of the colorings $\gamma_1,  \ldots, \gamma_{|\mathcal{B}_i|}$ is crossing-free,

\begin{equation}
1 + \sum_{j=1}^{|\mathcal{B}_i|} (\kappa_j  - 1)  =  \kappa, 
\end{equation}
where $(\kappa_j - 1)$ is the number of colors the coloring {$\gamma$} (and also $\gamma_j$) uses for the pocket  $\mathcal{P}_j$.  {The addition of $1$ on the left hand side of (3) attributes for  color $i$}, which was not counted in any of the pockets.
Plugging $(3)$ into $(2)$, the lemma follows. 
\qed

\end{proof}



\begin{theorem}
The number of disk centers placed by \textsc{ContiguousGreedy} is at most $2|OPT| - 1$.
\end{theorem}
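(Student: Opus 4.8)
The plan is to combine the two colouring lemmas already established with a direct pigeonhole argument on the greedy extension. First I would apply Lemma~\ref{crossFreeExistsLem} to the optimal collection $OPT$ to get a crossing-free disk-coloring $\gamma$ of $\partial P$, and let $\kappa$ be the number of colours it uses. Since every point coloured $i$ lies in $D_i \in OPT$, the coloring uses $\kappa \le |OPT|$ colours, and $\kappa = 1$ would force all of $\partial P$ into a single disk, i.e.\ $|OPT|=1$. The degenerate case $|OPT|=1$ I would handle directly: then $\partial P$ fits in one geodesic unit disk, so no edge of $P$ is longer than $2$, Step~1 never fires, \textsc{TestCover}$(v_1,v)$ succeeds for every vertex $v$, and \textsc{ContiguousGreedy} covers $\partial P$ with a single disk, so $k = 1 = 2|OPT|-1$. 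Assume henceforth $|OPT|\ge 2$, hence $\kappa \ge 2$; then Lemma~\ref{indLem} gives $|\Pi_\gamma| \le 2(\kappa-1) \le 2|OPT|-2$, where $\Pi_\gamma$ is the set of blocks of $\gamma$.

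Next I would extract the one geometric fact the greedy needs: each block $B \in \Pi_\gamma$ is monochromatic, say of colour $\chi$, so $B \subseteq D_\chi$, and hence $B$ and every sub-arc of $B$ can be covered by a single geodesic unit disk. I would then ``cut'' the cyclic sequence of blocks at the start vertex $v_1$: let $A_1,\dots,A_{m'}$ be the clockwise arcs obtained from $\Pi_\gamma$ by splitting the unique block $B^\star \ni v_1$ into its portion clockwise-after $v_1$ and its portion clockwise-before $v_1$ (no splitting occurs if $v_1$ happens to be a colour-change point). Then $m' \le |\Pi_\gamma|+1 \le 2|OPT|-1$, and each $A_j$, being contained in a single block, can be covered by a single geodesic unit disk. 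This extra arc is precisely the one disk the algorithm may ``waste'' on the block containing $v_1$.

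The core of the proof is then a pigeonhole on the greedy endpoints. Let $D_1,\dots,D_k$ be the disks placed by \textsc{ContiguousGreedy} in order and $v_1 = c_0, c_1, \dots, c_k = v_1$ the successive endpoints of the contiguous cover $\Gamma$, so that $\partial P[c_{i-1},c_i] \subseteq D_i$ and $c_i$ is the clockwise-maximal point with $\partial P[c_{i-1},c_i]$ coverable by a single geodesic unit disk. For the Step-3 disks this maximality is the specification of \textsc{AugmentShort}; for the Step-1 disks, placed at arclength steps of $2$ along a straight edge $e$, it follows from Lemma~\ref{pollackCon}: the geodesic distance from any centre is convex along $e$ (which is itself a geodesic), so a single geodesic unit disk can cover at most an arclength-$2$ stretch of $e$. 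Let $\sigma(i)$ be the index with $c_{i-1} \in A_{\sigma(i)}$ (taking the clockwise-later arc if $c_{i-1}$ lies on an arc boundary). The sub-arc of $A_{\sigma(i)}$ running from $c_{i-1}$ to the clockwise end of $A_{\sigma(i)}$ is coverable by one disk, so maximality forces $c_i$ to lie at or clockwise-past that endpoint; and since \textsc{ContiguousGreedy} halts exactly when the endpoint returns to $v_1$, we have $c_i \ne v_1$ for $i < k$, whence $\sigma(i+1) > \sigma(i)$. Therefore $1 = \sigma(1) < \sigma(2) < \dots < \sigma(k) \le m'$, and so $k \le m' \le 2|OPT|-1$.

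The step I expect to take the most care is this last one: showing that every placed disk is a genuine maximal clockwise extension — in particular for the Step-1 disks on long edges, which needs the ``arclength $\le 2$ on a straight chord'' observation — and accounting for the wrap-around at $v_1$ so that it contributes exactly one extra arc (giving $2|OPT|-1$ rather than $2|OPT|-2$). The case $|OPT|=1$ must be treated separately, since the generic counting only yields $k \le 2$ there. Everything else is bookkeeping layered on Lemmas~\ref{crossFreeExistsLem},~\ref{indLem} and~\ref{pollackCon}.
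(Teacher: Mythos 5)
Your proposal is correct and follows essentially the same route as the paper's own proof: apply Lemma~\ref{crossFreeExistsLem} to $OPT$, bound the number of blocks via Lemma~\ref{indLem}, split the block containing $v_1$ to get at most $2|OPT|-1$ arcs, and observe that each greedy step covers at least one full remaining arc, handling $|OPT|=1$ separately. Your strictly increasing map $\sigma$ and the explicit justification that Step-1 disks on long edges are also maximal extensions merely formalize what the paper states as "every disk fully covers at least one new block," so there is no substantive difference.
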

\begin{proof}

If $|OPT| = 1$ then, by its greedy nature, \textsc{ContiguousGreedy} also uses only one disk. 
If $|OPT| > 1$, let $\gamma_{{OPT}}$ be a crossing free disk-coloring of $\partial P$ w.r.t. ${OPT}$, whose existence is guaranteed by Lemma \ref{crossFreeExistsLem}. We let $(B_1, B_2, ..., B_m)$ be the collection of blocks induced by $\gamma_{{OPT}}$ ordered as they appear on $\partial P$  in clockwise order, with $B_1$ the block containing $v_1$. We split $B_1$ at $v_1$ into two blocks $B_l$ and $B_r$, with $B_l$ being the portion of $B_1$ counterclockwise from $v_1$, and $B_r = B_1 \setminus B_l$.

Now observe that by the greedy nature, every disk $D$ computed by \textsc{ContiguousGreedy} extends $\Gamma $ so that $\Gamma \cup D$ fully covers at least one new block in the sequence $(B_r, B_2, ..., B_m, B_l)$.  Therefore, after computing at most $m+1$ disks, $\Gamma = \partial P$. By Lemma \ref{indLem}, it holds that $m\leq 2(|OPT| - 1)$ and the theorem follows.  
\qed
\end{proof}

\subsection{Tightness of Analysis}
The analysis for the $2$-approximation ratio of \textsc{ContiguousGreedy} is almost tight, even for convex polygons, as can be seen by a rectangle of length $n$ and height $\epsilon > 0$. It can be covered with $n/(2\sqrt{1-\epsilon^2/4})$ many geodesic unit disks (by centering them on the median line at height $\epsilon/2$). On the other hand, \textsc{ContiguousGreedy} centers disks in steps of $2$ on the boundary, thus after finishing one side of the rectangle, each disk introduced a small uncovered hole on the other side. \textsc{ContiguousGreedy} covers those holes by placing another $n/2$ disks contiguously on the other side of the polygon, resulting in a total of $n$ disks needed.\\

\vspace{-30pt}

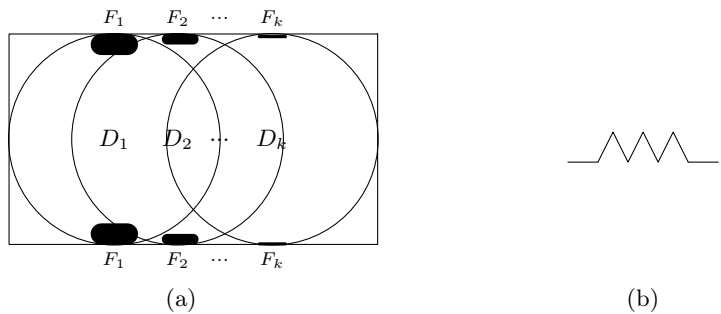
\begin{figure}
\center
\begin{tabular}{cc}
\subfloat[]{
\definecolor{xdxdff}{rgb}{0.49,0.49,1}
\definecolor{black}{rgb}{0,0,1}
\definecolor{cqcqcq}{rgb}{0.75,0.75,0.75}
\begin{tikzpicture}[scale=0.7, line cap=round,line join=round,>=triangle 45,x=1.0cm,y=1.0cm]
\draw (6,5)-- (-1,5)-- (-1,1)-- (6,1) -- (6,5);

\draw [line width=8pt] (0.75,4.8)-- (1.25,4.8);
\draw [line width=8pt] (0.75,1.2)-- (1.25,1.2);

\draw [line width=4pt] (2,4.9)-- (2.5,4.9);
\draw [line width=4pt] (2,1.1)-- (2.5,1.1);


\draw [line width=1.2pt] (3.75,4.95)-- (4.25,4.95);
\draw [line width=1.2pt] (3.75,1.01)-- (4.25,1.01);

\begin{scriptsize}
\node at (1, 5.3) {$F_1$};
\node at (1, 0.7) {$F_1$};

\node at (3, 5.3) {$...$};
\node at (3, .7) {$...$};

\node at (4, 5.3) {$F_k$};
\node at (4, .7) {$F_k$};

\node at (2.2, 5.3) {$F_2$};
\node at (2.2, 0.7) {$F_2$};

\end{scriptsize}

\draw[] (1,2.999) circle(2.01);
\node at (1, 3) {$D_1$};

\draw[] (2.2,2.999) circle(2.01);
\node at (2.2, 3) {$D_2$};

\node at (3, 2.999) {$...$};

\draw[] (4,2.999) circle(2.01);
\node at (4, 3) {$D_k$};

\end{tikzpicture}
}
& \subfloat[] {

\begin{tikzpicture}[line cap=round,line join=round,>=triangle 45,x=1.0cm,y=1.0cm]
\clip(-3.4,-1.5) rectangle (3,1.2);
\draw (-0.8,0)-- (-0.6,0.4)-- (-0.4,0)-- (-0.2,0.4)-- (0,0)-- (0.2,0.4)-- (0.4,0);
\draw (0.4,0)-- (.8,0);
\draw (-0.8,0)-- (-1.2,0);
\end{tikzpicture}

}
\end{tabular}
\caption{(a) Illustration of the polygon containing foldings $F_1, \ldots, F_k$ on the boundary. A global greedy algorithm starts covering the two $F_1$ foldings on opposite sides by the disk $D_1$, the two $F_2$ foldings by a disk $D_2$ and so on, while $OPT$ still only uses a constant number of disks to cover $\partial P$. (b) Illustration of a folding. }
\label{globalGreedy}
\end{figure}

Another natural greedy approach is to cover the largest amount of uncovered boundary at each step.  This algorithm results in an approximation ratio of $\Omega(\log n)$, i.e., it is unbounded with respect to $|OPT|$. An example where this greedy rule performs badly is illustrated in Fig.~\ref{globalGreedy}(a). The parts of the boundary denoted by $F_1, \ldots, F_k$ are dense {\em foldings} as shown in Fig.~\ref{globalGreedy}(b) where the boundary length of $F_1$ is twice that of $F_2$, four times that of $F_3$, and so on.  The global greedy algorithm first covers the two $F_1$ sections on opposite sides of the boundary (illustrated by $D_1$ in Fig.~\ref{globalGreedy}(a)), then the two $F_2$ sections continuing in this way until the two $F_k$ sections are covered, thereby having used $k$ disks to cover the foldings, (plus some constant number of disks to cover the rest of $\partial P$). Notice that when the height of the polygon is arbitrary close to $2$, the number of foldings can be made arbitrary large, while $OPT$ only uses a constant number of disks to cover $\partial P$.\\

It is worth noting that it is crucial that  \textsc{ContiguousGreedy} $\emph{exactly}$ computes the maximum extension of the contiguous boundary covered by a single geodesic unit disks in each iteration. Only approximately (even with $\epsilon$ precision) extending the contiguously covered part by a single geodesic unit disks results in an approximation factor of at least $4$ (instead of $2$). 
To see this, we refer to  Fig.~\ref{counterRef}, where $c_1$ is the endpoint of the $\epsilon$-approximate contiguous greedy extension in the first step and $c^*_1$ is the corresponding exact endpoint (obtained from \textsc{ContiguousGreedy}). The $\epsilon$-approximate algorithm continues by centering a disks at $D_2$ which covers the boundary from $c_1$  up to $c_2$. At this point, an exact extension could cover the boundary from $c_2$ up to $c^*_2$. However, the approximate algorithm may only cover up to $c_3$, by, for example, centering the third disk at $D_3$.  \textsc{ContiguousGreedy} covers up to $c^*_2$ using only two disks. Copying the polygon-section between $c^*_1$ and $c^*_2$, shows that an $\epsilon$-approximate algorithm performs at least twice as bad as \textsc{ContiguousGreedy}.

\begin{figure}
\center
\begin{tikzpicture}[line cap=round,line join=round,>=triangle 45,x=1.0cm,y=1.0cm, scale = 1.4]
\clip(-2.5,0.25) rectangle (5.25,3.25);
\draw [line width=3.2pt] (-2.,1.)-- (3.,1.);
\draw [color = white, line width=2.2pt] (-2.,1.)-- (3.,1.);

\draw [line width=3.2pt] (3.,1.)-- (3.,3.);
\draw [color = white, line width=2.2pt] (3.,1.)-- (3.,3.);

\draw [dotted] (0.6,1.)-- (0.6,0.6);
\draw [dotted] (1.,1.)-- (1.,0.6);
\draw [dotted] (3.,1.)-- (3.,0.6);
\draw [dotted] (3.,3.03)-- (3.4,3.03);

\draw [<->, dotted] (-2.,1.4)-- (1.,1.4);
\draw [dotted] (-2.03,1.05)-- (-2.03,1.4);
\draw [dotted] (1,1.)-- (1,1.4);

\draw [line width=3.2pt] (3,1.)-- (5.,1.);
\draw [color = white, line width=2.2pt] (3,1.)-- (5.,1.);

\draw [dotted] (-2.06, 1.04)-- (-2.3, 1.04);
\draw [dotted] (-2.06, 0.97)-- (-2.3, 0.97);

\draw[color=black] (-2.4,1) node {$\delta $};

\draw [<->, dotted] (3.4,1)-- (3.4,3.03);
\draw [<->, dotted] (0.6,0.6)-- (1.,0.6);
\draw [<->, dotted] (1.,0.6)-- (3.,0.6);


\draw [<->, dotted] (2.4,2.6)-- (2.4,3.03);
\draw[color=black] (2.,2.8) node {$\epsilon - \delta $};
\draw [dotted] (3.,3.03)-- (2.4,3.03);
\draw [dotted] (3,2.6)-- (2.4,2.6);


\begin{scriptsize}


\draw [fill=black] (1,1.03) circle (1.pt);
\draw[color=black] (1.25157995835,1.165648189885) node {$c^*_1$};

\draw [fill=black] (2.4,1.03) circle (1.pt);
\draw[color=black] (2.285993056,1.15648189885) node {$D_2$};

\fill [white] (2.94,1.03) rectangle (6.017,0.978);

\fill [white] (2.98,1.04) rectangle (3.02,0.978);

\draw [fill=black] (-0.685993056,1.03) circle (1pt);
\draw[color=black] (-0.6885993056,1.15648189885) node {$D^1$};

\draw [fill=black] (3.03,1.4) circle (1.pt);
\draw[color=black] (3.2085993056,1.45648189885) node {$D_3$};

\draw [fill=black] (.6,1.03) circle (1.pt);
\draw[color=black] (0.680025181185,1.15648189885) node {$c_1$};

\draw [fill=black] (2.97,2.6) circle (1.pt);
\draw[color=black] (2.75,2.5) node {$c_2$};

\draw [fill=black] (4.4,1.03) circle (1.pt);
\draw[color=black] (4.480025181185,1.15648189885) node {$c_3$};

\draw [fill=black] (5,1.03) circle (1.pt);
\draw[color=black] (5.180025181185,1.15648189885) node {$c^*_2$};

\draw[color=black] (.78,0.82) node {$\epsilon$};
\draw[color=black] (0,1.28) node {$2$};
\draw[color=black] (2,0.72) node {$1$};
\draw[color=black] (3.8,2) node {$1-\delta$};

\end{scriptsize}
\end{tikzpicture}
\caption{Illustration of the $\delta$-thin polygon where an $\epsilon$-approximate contiguous extension algorithm results in an approximation ratio larger than $2$.}
\label{counterRef}
\end{figure}

\section{Covering Large Perimeters}
\label{refAna}

In this section we show that if the polygon perimeter $L$ is significantly larger than $n$, i.e., $L \geq n^{1 + \delta}$, with $\delta > 0$, a simple linear time algorithm achieves an approximation ratio which goes to one as $L/n$ goes to infinity. For this, we decompose $\partial P$ into long and short portions, based on the length of the corresponding \emph{medial axis}. The medial axis is the set of points in $P$ which have more than one closest point on $\partial P$. It forms a tree whose edges are either line segments or parabolic arcs and it can be computed in linear time \cite{medialAxis}. For a line segment edge, the closest points to the boundary are a subset of two polygon edges; for a parabolic edge, the closest boundary points are a polygon vertex and a subset of a polygon edge.  The idea of the algorithm is to identify long edges of the medial axis (of length at least some constant $c >2$), and to cover the corresponding polygon boundary section (referred to as {\em corridors}) almost optimally using only a constant number of disks more than $OPT$ uses to cover the corridor. It is easy to see that each corridor stemming from a parabolic arc can be covered with at most two more disks than $OPT$ uses, by centering disks at distance $2$ from each other on the corresponding polygon boundary segment and one disk on the corresponding polygon vertex.  Each corridor consisting of a pair of polygon boundary segments can be covered by greedily centering disks on the corresponding medial axis as long as each disk contains corridor portions of length more than two; if the length becomes two or less, greedily center the disks on  corridor segments in steps of two.  Observe that also in this case, the number of disks needed to cover a corridor is at most two more than $OPT$ uses and their centers can be computed in time linear in their number. This holds since there is at most one point where the covering changes from centering disks on the medial axis to centering disks on $\partial P$. The rest of the polygon, i.e., the short portions,  can be covered greedily by centering $O(n)$ disks on $\partial P$.

Let $\mathcal{D}$ be the set of all disks placed by the algorithm, $\mathcal{D}_L \subseteq \mathcal{D}$ the disks covering the corridors and $\mathcal{D}_S \subseteq \mathcal{D}$ the $O(n)$ disks covering the short portion of $\partial P$.  
Since the number of edges in the medial axis is $O(n)$ (see \cite{medialAxis}) and the procedure for covering the long corridors uses at most two more disks than $OPT$ for each corridor, $|\mathcal{D}_L| \leq |OPT| + O(n)$. It therefore holds that $|\mathcal{D}| = |\mathcal{D}_L| + |\mathcal{D}_S| \leq |OPT| + O(n)$. It is easy to see that the disks of $OPT$ which contain a polygon vertex cover at most an $O(n)$ portion of $\partial P$ implying that $|OPT| =  \Omega(L) $.  Therefore, the approximation ratio can be written as
$$\frac{|\mathcal{D}|}{|OPT|} \leq 1 + \frac{O(n)}{|OPT|} = 1 + \frac{O(n)}{\Omega(L)} = 1 + O\left( \frac{n}{L} \right ) = 1 + o\left( 1 \right ).$$

\section{Acknowledgments}
\vspace{-8pt}
We thank Alon Efrat for his idea of looking at large perimeter polygons. We further thank the anonymous referees for their review of a previous version of this manuscript.

\vspace{-8pt}

\bibliographystyle{abbrv}	
\bibliography{refDrop}

\end{document}